\algrenewcommand\algorithmicindent{1em}
\crefname{table}{Table}{Tables}
\crefname{prop}{Proposition}{Proposition}
\theoremstyle{plain}
\newtheorem{thm}{Theorem}
\newtheorem{prop}[thm]{Proposition}
\newtheorem{cor}[thm]{Corollary}
\newcommand{\n}[1]{{#1}\xspace}  % normal
\DeclareMathOperator*{\argmax}{arg\,max}
\DeclareMathOperator*{\argmin}{arg\,min}
\DeclareMathOperator*{\dist}{\textsf{dist}}
\DeclareMathOperator*{\diam}{\textsf{diam}}
\newcommand{\algname}{\n{PULL}}
\newcommand{\procname}{\n{\texttt{PULL}}}
\newcommand{\prbname}{\n{\textsc{CUMAPF}}}
\newcommand{\mapf}{\n{\textsc{MAPF}}}
\newcommand{\lacam}{\n{LaCAM$^*$}}
\newcommand{\ilp}{\n{ILP}}
\newcommand{\indG}[1]{G[#1]}
\newcommand{\nei}[1]{N(#1)}
\newcommand{\neicl}[1]{N[#1]}
\newcommand{\degv}[1]{d(#1)}
\newcommand{\reachable}{F}
\newcommand{\reachablek}[1]{F_{#1}}
\newcommand{\qfrom}{\mathcal{Q}^{\text{from}}}
\newcommand{\setqfrom}{\qfrom}
\newcommand{\current}{\mathcal{Q}^{\text{to}}}
\newcommand{\setcurrent}{\current}
\newcommand{\currentk}[1]{\current_{#1}}
\newcommand{\conf}[1]{\mathcal{Q}_{#1}}
\newcommand{\setconf}[1]{\conf{#1}}
\newcommand{\cutset}{B}
\newcommand{\cutsetk}[1]{B_{#1}}
\newcommand{\reserved}{R}
\newcommand{\reservedk}[1]{R_{#1}}
\newcommand{\finalk}[1]{t_{#1}}
\newcommand{\bfs}[2]{\textsf{Reach}(#1,#2)}
\newcommand{\valci}[3]{\makecell{\footnotesize #1\\[-1pt]\scriptsize (#2\,--\,#3)}}
\newcommand{\mth}[2]{\makecell{#1\\[-1pt]\scriptsize #2}}
\newcommand{\mpar}[2]{#1~~{\footnotesize \textcolor{darkgray}{$\pm$#2}}}
\title{Polynomial-time Configuration Generator for\\Connected Unlabeled Multi-Agent Pathfinding}
\author {
Takahiro Suzuki\textsuperscript{\rm 1,2},
Keisuke Okumura\textsuperscript{\rm 1,3}
}
\begin{document}

\maketitle

\begin{abstract}
We consider \emph{Connected Unlabeled Multi-Agent Pathfinding} (\prbname), a variant of \mapf where interchangeable agents must be connected at all times. 
This problem is fundamental to swarm robotics applications such as self-reconfiguration and marching, where standard \mapf is insufficient as it does not guarantee the connectivity constraint.
Despite its simple structure, \prbname remains understudied and lacks practical algorithms.
We first develop an \textsc{Integer Linear Programming} (ILP) reduction to solve \prbname. 
Although this formulation provides a makespan-optimal plan, it is severely limited in terms of scalability and real-time responsiveness due to the large number of variables.
We therefore propose a suboptimal but complete algorithm named \emph{\algname}.
It is based on a rule-based one-step function that computes a subsequent configuration that preserves connectivity and advances towards the target configuration.
\algname is lightweight, and runs in $O(n^2)$ time per step in 2D grid, where $n$ is the number of agents. 
Empirically, \algname can quickly solve randomly generated instances containing hundreds of agents, which ILP cannot handle.
Furthermore, \algname's solution substantially improves upon a naive approach to CUMAPF.
% Furthermore, we develop an eventually optimal solver that integrates \algname into an existing search-based \mapf algorithm, providing a valuable tool for small-scale instances.
\end{abstract}

\section{Introduction}
Modern robotics is increasingly tackling complex challenges that require tight coordination among multiple agents, where effective collaboration frequently proves essential for task success. This is especially apparent in demanding scenarios such as collaborative work in space or autonomous exploration at disaster sites, where maintaining the integrity of the group becomes a critical factor.

This motivates the study of \emph{Connected Unlabeled Multi-agent Pathfinding} (\prbname): the problem of computing collision-free paths for a team of robots to reach target placements while always maintaining geometric connectivity. The importance of this problem is underscored by its applications in autonomous platooning~\cite{Diaz21}, swarm navigation~\cite{LF25}, and programmable matter~\cite{Hinnenthal24, Kostitsyna23}. 

\prbname can be viewed as a variant of the broader research field of \emph{Multi-Agent Pathfinding} (\mapf), which is a general term for the problem of planning paths for multiple agents in a shared environment that do not conflict with each other.
\mapf problems can be classified into \emph{labeled}, \emph{unlabeled}, and \emph{colored} settings~\cite{Stern19}, 
and \prbname can be viewed in the context of the second setting.
This setting can be applied when all agents can exchange roles or when they are functionally identical.

While unlabeled \mapf can be optimized in polynomial time for general cases~\cite{Yu13unlabeled}, adding connectivity constraints makes the problem significantly harder. 
Indeed, \citet{Fekete21} showed that makespan optimization for \prbname is NP-hard even on simple 2D grids.
Moreover, unlike the unlabeled case, various algorithms developed for labeled \mapf (see \Cref{tab:MAPFwork}) do not work on \prbname, 
since they cannot explicitly handle connectivity constraints.
% due to the constraint, since these cannot handle it. 
This suggests that \prbname raises novel challenges unaddressed by existing frameworks, necessitating a new algorithmic framework.

In the context of MAPF, various algorithms have been proposed depending on the application and the characteristics of the solution needed: e.g., 
computationally intensive
\emph{optimal algorithms}, and lightweight \emph{suboptimal algorithms}.
Optimal algorithms, like CBS~\cite{Sharon15}, guarantee minimum solution length; however, their computational cost is prohibitive for large-scale systems. 
In contrast, suboptimal algorithms do not guarantee optimality, but they provide fast output even for large-scale instances~\cite{Li22,PIBT22,LaCAM23}.
Research on large-scale automation has become increasingly important, demanding a suboptimal algorithm for practical applications. 
\Cref{tab:MAPFwork} shows several related works of MAPF and its variants.
Although there are many algorithms for large-scale labeled and unlabeled MAPF, their counterparts for \prbname are missing; at present, we only know the NP-hardness, and an approximation algorithm that runs in restricted conditions.

\begin{table*}
    \centering
    \setlength{\tabcolsep}{5pt}
    \setcellgapes{-2pt}
\makegapedcells
{\small
\begin{tabular}
    {@{}llll@{}}
    \toprule
       & optimal algorithm & suboptimal algorithm & theoretical work \\
      \midrule
     Labeled \mapf & \makecell[l]{M$^*$~\cite{Wagner15},\\CBS~\cite{Sharon15}~
     $\ldots$} & 
     \makecell[l]{
     PIBT~\cite{PIBT22},\\
     LNS2~\cite{Li22}~
     $\ldots$
     }&
     \makecell[l]{
     Poly-time for existence~\cite{Roger12},\\
     NP-hard for optimization~\cite{Yu13hard}\\
     }\\
     \midrule
     Unlabeled \mapf & 
     \makecell[l]{
      Reduction to \textsc{Max-Flow}\\
      \cite{Yu13unlabeled}
     } & 
     \makecell[l]{
      TSWAP\\
      \cite{TSWAP23}
     }& 
     \makecell[l]{
     Poly-time for optimization~\cite{Yu13unlabeled}\\
     }\\
     \midrule
     \prbname & \makecell[l]{
        \textbf{Reduction to ILP (this work)},\\
        \textcolor{lightgray}{(\algname + \lacam; appendix)}
     } & 
     \makecell[l]{
        \textbf{\algname (this work)}
     }& 
     \makecell[l]{
        NP-hard for optimization
        \cite{Fekete21}
     }\\
     \bottomrule
\end{tabular}
}

\caption{Categorization of \mapf variants and their solutions. 
While we focus on ILP and \algname for \prbname, \Cref{sec:lacam} further explores an eventually optimal algorithm, which quickly finds initial suboptimal solutions and refines them towards the optimum.
This is achieved by embedding \algname into a configuration generator-based \mapf algorithm called \lacam~\cite{LaCAM23}.
We omitted it due to limited page space and the subtle practical effects of solution refinement; see \Cref{sec:lacam,sec:explacam} for the discussion.}

\label{tab:MAPFwork}
\end{table*}
\paragraph{Contribution.} 
In this paper, we provide both optimal and suboptimal approaches for \prbname.
We first present a reduction-based algorithm that utilizes an \textsc{Integer Linear Programming} (\ilp).
Although this method finds a makespan-optimal solution, it is not scalable and is unsuitable for real-time applications due to the large number of variables and constraints.
This motivates us to develop a suboptimal yet polynomial-time and complete algorithm.
The proposed \emph{\algname} algorithm is a \emph{configuration generator}, efficiently computing the next placement of the agents from the current ones and the target.
\algname is computationally lightweight; it runs in $O(\Delta^2n^2)$ time per one step, where $\Delta$ and $n$ are the maximum degree of a graph and the number of agents, respectively.
Furthermore, \algname outputs empirically efficient solutions compared to those obtained by a trivial approach (e.g., 350\% improvement for 500 agents on a 32$\times$32 grid).
In what follows, we present problem formulation, related work, algorithms, and evaluation in order.
The code is available on \url{https://github.com/su3taka/CUMAPF-PULL}.

\section{Preliminaries}\label{sec: preliminaries}

\paragraph{Notation.}
Let $G = (V, E)$ be a simple, undirected, unweighted, and finite connected graph. 
For a vertex $v \in V$, the \emph{open neighborhood} is denoted by $\nei{v} = \{w \in V \mid vw \in E\}$, and its \emph{degree} is $\degv{v} = |\nei{v}|$. 
The maximum degree of $G$ is $\Delta = \max_{v \in V} \degv{v}$. 
For a vertex set $V' \subseteq V$, their neighborhoods are $N(V') = (\bigcup_{v \in V'} N(v)) \setminus V'$.
For a vertex subset $S \subseteq V$, we denote the subgraph induced by $S$ as $\indG{S}$. 
A graph is \emph{connected} if there is a path between any two distinct vertices. 
We use a notation for a set of continuous integers: let $[i,j]$ be a set $\{i,i+1,\dots, j-1,j\}$ for $i\le j$.

Let $\bfs{G}{v} \subseteq V$ be the set of all vertices reachable from $v$ in $G$. 
This set can be computed in linear time by using breadth-first search (BFS), simultaneously computing their shortest path from $u\in \bfs{G}{u}$ to $v$ and its length $\dist(u,v)$.
A \emph{component} of $G$ is a subgraph of the form $G[\bfs{G}{v}]$
for some $v \in V$.
A vertex $v \in V$ is a \emph{cut vertex} if $\indG{V \setminus \{v\}}$ has more components than $G$. 
The set of all cut vertices of $G$ is denoted by $\textsf{Cut}(G)$. This set can be computed in linear time by using depth-first search (DFS).

\paragraph{Problem Definition.}
Let $G = (V, E)$ be a graph and $A = [1,n]$ be a set of $n$ agents. 
A \emph{configuration} $\mathcal{Q} = (q_1, \dots, q_n) \in V^n$ is an assignment of each agent to a vertex. 
Although the configuration $\mathcal{Q}$ is defined as a list of vertices, we may also refer to it as a set of vertices for convenience.
For the sake of simplicity, let $\mathcal{Q}(R)$ be $\{\mathcal{Q}[i] \mid i \in R\}$ for a set of agents $R\subseteq A$. 
A configuration $\mathcal{Q}$ is \emph{connected} if the induced subgraph $\indG{\mathcal{Q}}$ is connected.

Given two sets of vertices $S, T \subset V$ with $|S|=|T|=n$, the goal of \prbname is to find a finite sequence of configurations (a \emph{plan}) $\Pi = [\conf{0}=S, \conf{1}, \ldots, \conf{t^{\ast}}=T]$. 
Throughout the plan, agents can only move to adjacent (or the same) vertices, and must avoid either vertex or swap conflicts, while maintaining the connectivity of the entire configuration.
Formally, a \emph{valid} plan $\Pi$ satisfies the following conditions:
\begin{enumerate}
   \item $\conf{0} = S\land \conf{t^{\ast}} = T$.
   \item $\forall i \in A, \forall k \in [1, t^{\ast}]: \mathcal{Q}_{k}[i] \in \nei{\mathcal{Q}_{k-1}[i]}\cup \{\mathcal{Q}_{k-1}[i]\}$. (\emph{reachability})
   \item $\forall k \in [0,t^{\ast}], \forall i, j \in A, i \neq j: \mathcal{Q}_k[i] \neq \mathcal{Q}_k[j]$. (avoidance of \emph{vertex conflict})
   \item $\forall k \in [1,t^{\ast}], \forall i, j \in A, i \neq j: (\conf{k}[i],\conf{k}[j])\ne (\conf{k-1}[j],\conf{k-1}[i])$.  (avoidance of \emph{swap conflict})
   \item $\forall k \in [0,t^{\ast}]: \conf{k}$ is connected.
\end{enumerate}
We aim to find a plan $\Pi$ that minimizes the makespan $t^{\ast}$.

\section{Related Work}
\paragraph{Unlabeled MAPF (a.k.a. \emph{Anonymous MAPF}).}
Unlike the labeled setting, the unlabeled variant introduces the additional subproblem of assigning agents to target locations.
Since the optimal goal assignment depends on the resulting path, these two subproblems---assignment and pathfinding---are tightly coupled and solved simultaneously.
The seminal work~\cite{Yu13unlabeled} showed that optimal solutions can be found in polynomial time via a reduction to the \textsc{Maximum Flow} problem.
However, this flow-based approach often suffers from high computational costs on large instances in practice, limiting its scalability.
These scalability challenges have motivated the development of fast suboptimal algorithms~\cite{TSWAP23}.

\paragraph{MAPF with additional constraint.}
\mapf typically considers vertex or swap conflicts; however, practical applications often require handling additional constraints due to communication or other limitations; for instance, \citet{Li20} aims for agents to move while maintaining a specific formation, and \citet{Kristan25} conducts theoretical studies on combinatorial constraints on graphs, e.g., dominating set and independent set.
\citet{Tateo2018} show the PSPACE-hardness in the case where the physical and communication features are given separately.

\paragraph{\prbname.}
\citet{Fekete21} show that finding optimal solutions to \prbname is NP-hard, even in 2D empty grid graphs.
This hardness also applies to finding a good solution, specifically a 3/2-approximation of the makespan.
The work also introduces a polynomial-time, constant-factor approximation algorithm for makespan in \emph{empty} 2D grid environments.
Despite this, its practical applicability is severely limited, as it requires a large number of theoretical steps, has quite a large approximation ratio, and works only under restricted conditions, motivating our study.

\section{Optimal Approach: ILP}
While the makespan-optimal unlabeled \mapf can be solved by reduction to the \textsc{Maximum Flow} problem~\cite{Yu13unlabeled}, the same idea does not apply to \prbname due to the connectivity constraint.
We thus adopt the reduction to \textsc{Integer Linear Programming} (\ilp).
This section first formulates \ilp for the unlabeled case, and extends it to include the connectivity constraint.

\subsection{ILP Formulation of Unlabeled \mapf}

In this section, we describe a reduction to the unlabeled \mapf to \ilp.
Note that unlabeled MAPF can be reduced to \textsc{Maximum Flow}, which in turn can be reduced to an \ilp; however, for simplicity, we directly give an \ilp formulation.

First, we define the \emph{Bounded} variant of unlabeled \mapf: given a graph $G=(V,E)$, vertex subset $S,T \subseteq V$, and \emph{time bound} $\tau\in \mathbb{N}$, it determines whether there exists a plan with conditions 1--4 in \Cref{sec: preliminaries} and the length is at most $\tau$.

Here, we assume that all variables are nonnegative.
Let $x_{v,t}$ be a variable that indicates whether an agent is at vertex $v$ at step $t$, that is, $x_{v,t}=1$ when $v\in \conf{t}$, and 0 otherwise.
Also, $f_{u,v,t}$ represents whether an agent at vertex $u$ at step $t$ moves to $v\in \neicl{u}$ at timestep $t+1$, using an edge $uv\in E$.
Then, unlabeled \mapf can be formulated as follows.
\begin{enumerate}
    \item For every $u,v\in V$, $x_{v,t} \le 1$.
    \item Initial configuration $\conf{0}$ is $S$: for $v\in V$, $x_{v,0}=1$ if $v\in S$, and 0 otherwise.
    \item Final configuration $\conf{\tau}$ is $T$: for $v\in V$, $x_{v,\tau}=1$ if $v\in T$, and 0 otherwise.
    \item For every timestep $t\in [0,\tau]$, at most $x_u$ agents on $u$ move to some vertex $v\in \neicl{u}$: for every $u\in V$, $\sum_{v\in N[u]} f_{u,v,t} = x_{u,t}$
    \item For every timestep $t\in [1,\tau+1]$, at most $x_u$ agents on $u$ come from some vertex $v\in \neicl{u}$: for every $u\in V$, $\sum_{v\in N[u]} f_{v,u,t-1} = x_{u,t}$
\end{enumerate}
Then we set an objective as a constant, to check whether these constraints are \emph{feasible}, i.e., there is a satisfying assignment for these constraints.
Note that swap conflicts can be resolved by a common postprocessing~\cite{Yu13unlabeled}.

By solving this bounded unlabeled \mapf incrementally from $\tau=1$ at most $|V|+n-2$ times, there exists at least one feasible instance~\cite{Yu13unlabeled}.
Let $\tau'$ be the minimum time bound in the instance of unlabeled \mapf. 
Note that the assignment of $x_{v,t}$ corresponds to the position of each agent at each timestep.
Based on the assignment for this instance, we can construct a valid plan $\Pi$ with length at most $\tau'$ by a proper postprocessing.%
\footnote{Since the information about agent indices has been lost, we use a perfect matching to recover it. Detailed implementations are in \url{https://github.com/su3taka/CUMAPF-PULL}.}
Thus, this formulation leads to an optimal algorithm.

\subsection{Connectivity Constraint}
One can observe that all we have to do is to add the constraint to $V_t\coloneqq \{v\mid x_{v,t}=1\}$ for every $t\in [0,\tau]$ such that $G[V_t]$ is connected, i.e., there is at least one $(u, v)$-path for every $u,v\in V_t$.
The connectivity of the graph is encoded by a \emph{Single-commodity Flow} model \cite{Conrad07}: a single source at the root supplies one commodity to each selected vertex, with flow allowed only through the vertices in $V_t$.

First, we introduce variables $r_{v,t}\in [0,x_{v,t}]$ that represent whether $v$ is chosen as a source, and $\ell_{uv,t}$ indicating the number of commodities that are carried through an edge from $u$ to $v$ such that $uv\in E$.
Then, we can formulate the constraints as follows;
% Then, we formulate the connectivity constraint by transporting a coin from the root to another vertex:
\begin{enumerate}
    \item $\forall v\in V$, $r_{v,t}\le x_{v,t}$,
    % \item $\forall uv\in E$, $\ell_{uv,t}\le e_{uv}$,
    \item $\sum_{v\in V}r_{v,t}=1$,
    \item $\forall u,v\in V$. $\ell_{uv,t}\le (|A|-1)x_{v,t}$, and
    \item $\forall v\in V$. $\sum_{u\in N[v]}(\ell_{uv,t}-\ell_{vu,t})=x_{v,t}- |A|r_{v,t}$.
\end{enumerate}
We now justify these constraints, with reference to Fig.~\ref{fig:ilpidea}.
First, in Constraints 1 and 2, we can restrict the source of flow to a single vertex per timestep.
The last two constraints coincide with carrying commodities via a flow network that consists of vertices in $V_t$.
To do so, Constraint 3 restricts usable edges to those that connect only vertices in $V_t$; if $x_{u,t}=x_{v,t}=1$, then edge $uv$ can carry 
$|A|-1$ units of flow, that is, all the flow except the $(u,u)$-path.
Regarding the last Constraint, consider adding unit of flow that goes along some $(v_1,v_2)$-path to the current flow, and see the value $\sum_{u\in N[v]}(\ell_{uv,t}-\ell_{vu,t})$ of the left-hand side in this case. 
At the source $v_1$, one unit of flow leaves, so the value decreases by 1; at $v_2$, one unit of flow enters, so the value increases by 1; and at intermediate vertices, one unit enters, and one unit leaves, thus the net change is 0.
Here, in our network, we start from an empty flow, and add a $(u,v)$-path for the root $u$ with $r_{u,t}=1$ and every $v \in V_t$. Therefore, the final value is $1-|A|$ at the root $u$, 1 for every $v\in V_t$, and 0 for every $v\notin V_t$.
This is because $u$ is selected $|A|$ times as an initial vertex and once as a target vertex, and $v\in V_t$ is selected once as a target vertex.
The value $x_{v,t}- |A|r_{v,t}$ of the right-hand side coincides exactly with this value.
The optimality is justified in the same way as for unlabeled \mapf.

\begin{figure}[ht]
    \centering
\begin{tikzpicture}[>=stealth,scale=1]
\tikzset{root/.style={circle,fill,inner sep=2pt}}
\tikzset{inner/.style={circle, draw,fill=white,inner sep=2pt}}
\tikzset{leaf/.style={circle, draw,fill=white,inner sep=2pt}}
\tikzset{edge/.style={very thick}}
\tikzset{tedge/.style={semithick}}
\tikzset{redge/.style={very thick, red}}
%---------------- left ----------------%
\begin{scope}[shift={(-5,0)}]
  \node[root] (L) at (0,0) {};

  \node[inner] (v1) at (0.8,1) {};
  \node[inner] (v2) at (1,0) {};
  \node[inner] (v3) at (0.7,-0.5) {};
  \node[inner] (v4) at (1.5,0.6) {};
  \node[inner] (v5) at (1.8,-0.7) {};
  \node[inner] (v6) at (2,1) {};
  \node[inner] (v7) at (2,-0.2) {};
  \node[inner] (v8) at (2.5,0.2) {};
  \draw[edge] (L) -- (v1);
  \draw[edge] (L) -- (v2);
  \draw[edge] (L) -- (v3);
  \draw[edge] (v1) -- (v4);
  \draw[edge] (v4) -- (v6);
  \draw[edge] (v2) -- (v7);
  \draw[edge] (v3) -- (v5);
  \draw[tedge] (v5) -- (v7);
  \draw[tedge] (v4) -- (v7);
  \draw[tedge] (v8) -- (v7);
  \draw[tedge] (v2) -- (v4);
  \draw[tedge] (v2) -- (v3);
  \draw[tedge] (v8) -- (v7);
  \draw[tedge] (v8) -- (v6);
\end{scope}

\node at (-1.5,0) {$\Rightarrow$};

%---------------- right ----------------%
\begin{scope}[shift={(-0.5,0)}]
  \node[root] (L) at (0,0) {};

  \node[inner] (v1) at (0.8,1) {};
  \node[inner] (v2) at (1,0) {};
  \node[inner] (v3) at (0.7,-0.5) {};
  \node[inner] (v4) at (1.5,0.6) {};
  \node[inner] (v5) at (1.8,-0.7) {};
  \node[inner] (v6) at (2,1) {};
  \node[inner] (v7) at (2,-0.2) {};
  \node[inner] (v8) at (2.5,0.2) {};
  \draw[edge] (L) -- (v1);
  \draw[edge] (L) -- (v2);
  \draw[edge] (L) -- (v3);
  \draw[edge] (v1) -- (v4);
  \draw[edge] (v4) -- (v6);
  \draw[edge] (v2) -- (v7);
  \draw[edge] (v3) -- (v5);
  \draw[tedge] (v5) -- (v7);
  \draw[tedge] (v4) -- (v7);
  \draw[tedge] (v8) -- (v7);
  \draw[tedge] (v2) -- (v4);
  \draw[tedge] (v2) -- (v3);
  \draw[tedge] (v8) -- (v7);
  \draw[tedge] (v8) -- (v6);
  \draw[redge] (L) -- (v3);
  \draw[redge] (v3) -- (v2);
  \draw[redge] (v2) -- (v7);
  \draw[redge] (v7) -- (v8);
  
  \node at (-0.4,0) {{\scriptsize $-1$}};
  \node at (0.4,-0.7) {{\scriptsize $\pm 0$}};
  \node at (0.8,0.2) {{\scriptsize $\pm 0$}};
  \node at (2.3,-0.3) {{\scriptsize $\pm 0$}};
  \node at (2.6,0.5) {{\scriptsize $+1$}};
\end{scope}

\end{tikzpicture}
    \caption{Constraint 4 for \ilp reduction. Circles represent vertices of the graph, and a black circle represents the sources determined by Constraint 1. Lines represent edges, and bold lines indicate edges used by the flow. (\emph{Left}) An intermediate network and several flows. (\emph{Right}) The network with a new flow (the red path). We indicate the net change of $\sum_{u\in N[v]}(\ell_{uv,t}-\ell_{vu,t})$ for each vertex on the red path.}
    \label{fig:ilpidea}
\end{figure}

\subsection{Limitation – Scalability}\label{sec:expilp}
One can observe that, 
given an instance of bounded \prbname, we solve an \ilp with $O((|V|+|E|)\tau)$ variables and $O(|V|\tau)$ constraints.

\begin{table}[t]
    \centering
\caption{Evaluation on the optimal algorithm, using 300 \SI{}{\second} timeout. For each map, we prepared five types of $n$, generated 50 random instances for each. The time column reports the average time to find a solution over the solved instances. We also report the average running time of \algname (\SI{}{\second}), the ratio of time improvement (\ilp/\algname), and the suboptimality of makespan (\algname/\ilp) on the instances successfully solved by \ilp.
The ratio is omitted in the last row, due to the small sample size.}\label{tab:ILP}
\setlength{\tabcolsep}{2pt}
\renewcommand{\arraystretch}{1}
\begin{adjustbox}{max width=\textwidth}
{\small
\begin{tabular}{crrrrrr}
\toprule
&& \multicolumn{1}{c}{\ilp} &\multicolumn{2}{c}{time(\SI{}{\second})}&\multicolumn{2}{c}{ratio}\\
\cmidrule{4-7}
  Map & $n$ & solved (\%) & \ilp & \algname & speedup& makespan\\
\midrule
  \makecell[c]{{\scriptsize\emph{empty-8-8}}\\ \adjustbox{raise=-5mm}{\includegraphics[width = 0.15\linewidth]{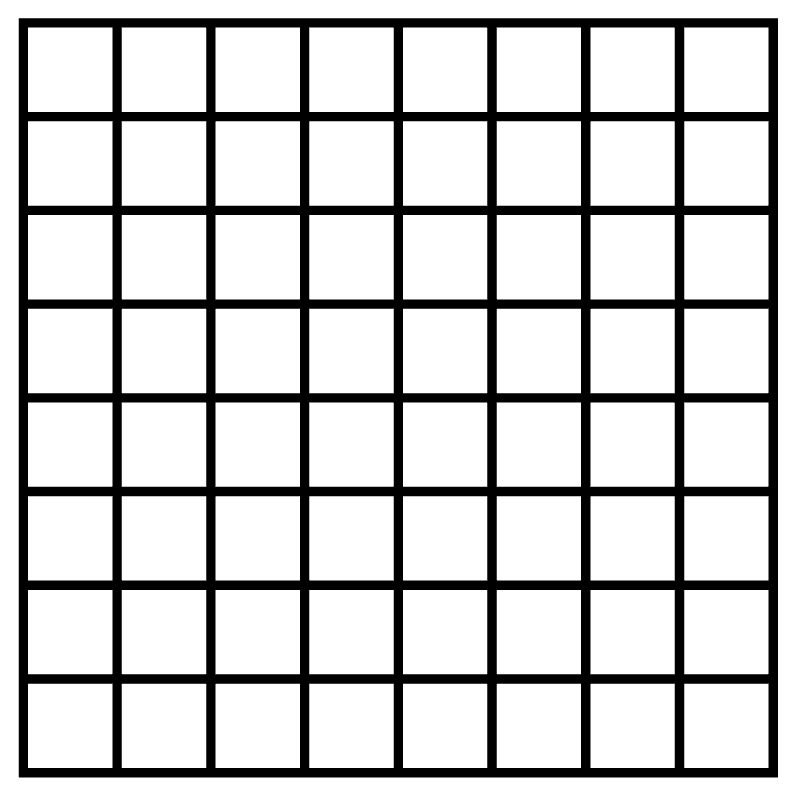}}} 
& \makecell[r]{10\\15\\20\\25\\30}&\makecell[r]{100\\100\\100\\100\\100}&\makecell[r]{0.137\\0.226\\0.233\\0.202\\0.190}  &\makecell[r]{0.005\\0.010\\0.013\\0.015\\0.018} &
\makecell[r]{27.36\\18.73\\15.15\\11.95\\9.934}&
\makecell[r]{1.347\\1.484\\1.540\\1.678\\1.713}\\
\midrule
  \makecell[c]{{\scriptsize\emph{random-32-32-20}}\\\adjustbox{raise=-5mm}{\includegraphics[width = 0.15\linewidth]{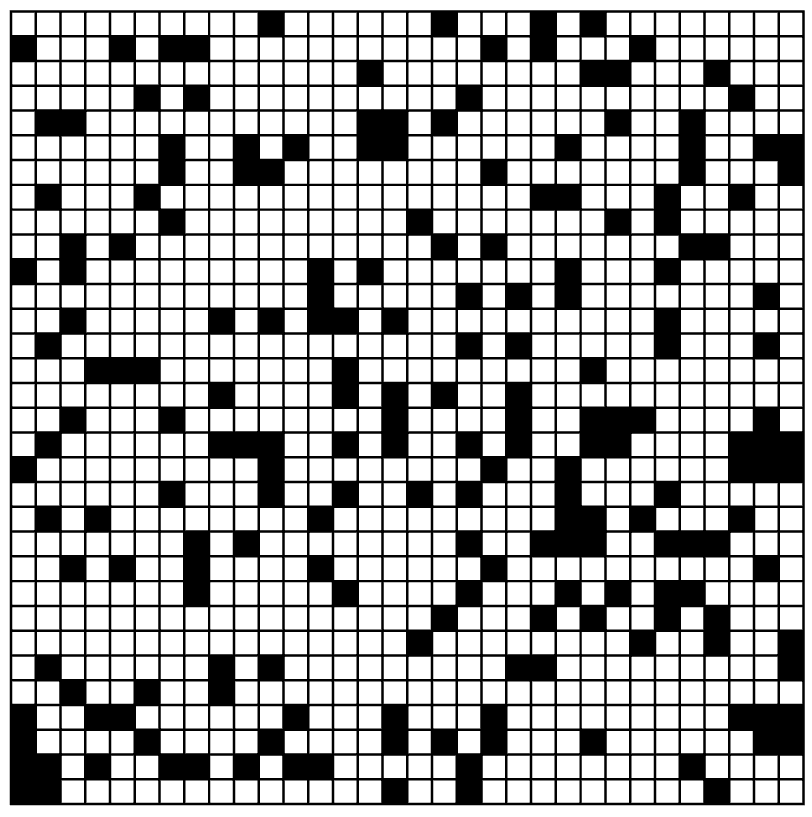}}} &
\makecell[r]{10\\20\\30\\40\\50}&\makecell[r]{82\\48\\34\\24\\4}  &
\makecell[r]{56.34\\75.36\\57.43\\77.75\\93.66} &\makecell[r]{0.020\\0.050\\0.094\\0.157\\0.231} &
\makecell[r]{1660\\1499\\862.1\\804.1\\N/A}&
\makecell[r]{1.131\\1.460\\1.650\\1.676\\N/A}\\
\bottomrule
\end{tabular}
}
\end{adjustbox}
\end{table}

Although this method can efficiently find optimal solutions on small instances, the computational costs on large-scale problems become prohibitively high.
\Cref{tab:ILP} shows the empirical results for two maps retrieved from the well-known MAPF benchmarks~\cite{Stern19}, using Gurobi as the ILP solver.
On a small map (\emph{empty-8-8}), the solver can find an optimal solution within a second, while on a larger map (\emph{random-32-32-20}), the solver often fails to find a solution, and the number of failed instances is increasing as $n$ increases.
Moreover, the average time to find a solution also increases as $n$ increases.
This empirical result motivates us to explore a fast and scalable suboptimal approach, which we will introduce next.

\section{Suboptimal Approach: \algname}

This section proposes \algname, a suboptimal \prbname algorithm. The algorithm is designed as a configuration generator that takes the current agent configuration $\qfrom$ and the target vertex set $T$ as input to compute the configuration for the next timestep. By iteratively applying this function from $S$, the algorithm is guaranteed to converge to $T$ in a finite number of steps.
The pseudocode for the algorithm is provided in \Cref{alg:PULL}. The following subsections supplement the pseudocode by explaining its main structure and core mechanisms.
Herein, $\textsf{next}(a,b)$ denotes the parent vertex of $a\in V$ on the BFS-tree rooted by $b\in V$.
Cleary, we have $\dist(\textsf{next}(a,b),b)=\dist(a,b)-1$ when $a\ne b$.

\subsection{Concept}\label{sec:concept}

\begin{figure}
    \centering
    \begin{tikzpicture}[
    xscale = 0.65,
    yscale = 0.55,
    transform shape,
    base_ellipse/.style={ellipse, draw, minimum width=3cm, minimum height=0.7cm},
    red_ellipse/.style={base_ellipse, draw=red, fill=red!20},
    blue_ellipse/.style={base_ellipse, draw=blue, fill=blue!20},
    red_dot/.style={circle, fill=red, draw=red, inner sep=2pt},
    black_dot/.style={circle, fill=black, draw=black, inner sep=2pt}
]

\begin{scope}[xshift=0cm]
    \node[blue_ellipse] (blue1) at (0,0.7) {};
    \node[red_ellipse, fill=red!60] (red1) at (0,4) {};
    \draw[->, thick] (red1.south) -- (blue1.north);
    \node[fill=white, align=center, rotate = 90, font = \large]
    at (0, 2.3) {$\cdots$};
    \node[fill=white, align=center, font = \huge]
    at (-1.5, 4.73) {$\conf{0}$};
    \node[fill=white, align=center, font = \huge]
    at (-1.1, 1.34) {$T$};
    \node[black_dot] (x1) at (0, 3.7) {};
    \node[align=center, font = \huge]
    at (0.5, 3.3) {$x$};
    \node[black_dot] (y1) at (0, 0.9) {};
    \node[align=center, font = \huge]
    at (0.5, 1.3) {$y$};
\end{scope}

\begin{scope}[xshift=4cm]
    \node[blue_ellipse] (blue2) at (0,0.7) {};
    \node[red_ellipse] (red2) at (0,4) {};
    \node[base_ellipse, draw=red, fill=red!60, minimum width=2cm, minimum height=0.5cm, yshift = -1mm] at (red2.center) {};
    \draw[->, thick] (red2.south) -- (blue2.north);
    \node[red_dot] at (0, 3.5) {};
    % \node[red_dot] at (0, 3.2) {};
    \node[fill=white, align=center, rotate = 90, font = \large]
    at (0, 2.3) {$\cdots$};
    \node[fill=white, align=center, font = \huge]
    at (-1, 4.73) {$\conf{1}$};
    \node[fill=white, align=center, font = \huge]
    at (-1.1, 1.34) {$T$};
    \node[black_dot] (x1) at (0, 3.7) {};
    \node[black_dot] (y1) at (0, 0.9) {};
\end{scope}

\begin{scope}[xshift=8cm]
    \node[blue_ellipse] (blue3) at (0,0.7) {};
    \node[red_ellipse] (red3) at (0,4) {};
    \draw[->, thick] (red3.south) -- (blue3.north);
    \foreach \y in {3.5,3.2, 2.9, 1.7, 1.4, 1.1} {
        \node[red_dot] at (0, \y) {};
    }
    \node[fill=white, align=center, rotate = 90, font = \large]
    at (0, 2.3) {$\cdots$};
    \node[fill=white, align=center, font = \huge]
    at (-1, 4.73) {$\conf{t'}$};
    \node[fill=white, align=center, font = \huge]
    at (-1.1, 1.34) {$T$};
    \node[black_dot] (x1) at (0, 3.7) {};
    \node[black_dot] (y1) at (0, 0.9) {};
\end{scope}

\draw[->, gray, very thick] (1, 2) -- (2.5, 2);
\draw[->, dashed, gray, very thick] (5.5, 2) -- (7, 2);

\end{tikzpicture}
    \caption{The naive approach for solving \prbname. Light red represents $\qfrom=\setconf{0}$, deep red represents the current configuration, and blue represents $T$. (\emph{Left}) Initial configuration, 
    (\emph{Center}) generated configuration $\conf{1}$ from $\conf{0}$, (\emph{Right}) configuration at step $t'=\dist(x,y)$.}
    \label{fig:idea}
\end{figure}

We begin the description of \algname with a bare minimum idea to solve \prbname.
Here, given an input configuration $\qfrom$ that is not duplicated with a target $T$ (i.e., $\qfrom \cap T = \emptyset$), we are interested in obtaining a transitionable configuration $\current$ from $\qfrom$ that is closer to $T$.

Let $x\in \qfrom$ and $y\in T$ be a pair of vertices such that $(x,y)= \argmin_{(s,t)\in \qfrom\times T} \dist(s,t)$ (\Cref{fig:idea} left).
If the agent currently on $x$ can move to $z\coloneqq \textsf{next}(x,y)$ without breaking connectivity, then we can guarantee that there is at least one agent that moves towards the goal.

If $(\qfrom\setminus\{x\})\cup \{z\}$ is not connected, then we have to move another agent on $u\in \nei{x}$ to $x$. 
This operation might generate another disconnected configuration; then we can repeat the same process until the configuration is connected. 
This recursive procedure allows the moved agent to form a single path that follows the DFS-tree over connected agents.
It always terminates when it reaches a vertex that does not ``cut'' the configuration, since any leaf of a DFS-tree cannot be a cut vertex.
We can ensure that the agent closest to its goal in $\qfrom$ moves toward $T$, and we obtain a configuration that is ``closer'' to the goals than $\qfrom$ (\Cref{fig:idea} center).

By iteratively applying this process, the agent moves a distance of one per step on the shortest $(x,y)$-path towards $y$.
Then, there is a step $t$ such that exactly one agent reaches $y$ (\Cref{fig:idea} right).
The remaining procedure is to ``pour'' the agents into the goal, which will be elaborated in \Cref{sec:alg}.

While this naive approach can solve CUMAPF, it provides a highly makespan-suboptimal solution in practice.
This is because the algorithm only moves the agents along a simple path, starting from $\textsf{next}(x,y)$, at each step, and other agents remain unchanged.
For example, it only executes the leftmost column in \Cref{fig:execution}, resulting in the configuration shown in the upper part of the second column.

The aforementioned idea uses a single path in the induced subgraph $\indG{\setqfrom}$ when generating the next configuration.
However, there may exist multiple paths towards the goals as long as they are pairwise disjoint and the resulting configuration $\current$ is connected.
That is, we can synthesise a configuration much closer to the goal by moving other unconstrained agents.
This idea constitutes the approach of \algname.

\subsection{Algorithm}\label{sec:alg}
{\small
	\begin{algorithm}[t]
		\caption{\algname}
		\label{alg:PULL}
		\begin{algorithmic}[1]  
			\Require configuration $\qfrom$, agents $A$, goals $T$
			\Ensure configuration $\current$ \Comment{initialized with $\qfrom$}
            \State $R\leftarrow \emptyset$\label{pull:init}

            \medskip
            \Statex {\small\emph{(when the configuration and goals overlapped)}}
            \If{$\setqfrom\cap T \ne \emptyset$} \label{pull:goalornot}
            \State $\mathcal{P}\leftarrow$ Components of $(G[\setqfrom\cap T])$ \label{pull:goal_s}
            \State Sort $P_i\in \mathcal{P}$ in descending order of $|P_i|$\label{pull:sortcc}
            \For{$P_i\in \mathcal{P}$}\label{zhoge1}
            \For{$v\in (N(P_i) \cap T)$} \label{zhoge2}
            \If{$\nexists j$ s.t. $\current[j]=v$}\label{pull: goal_call} $R \leftarrow \procname(v, R, P_i)$ \label{zhoge3}
            \EndIf
            \EndFor
            \For{$v\in P_i$}
            \If{$\exists i$ s.t. $\current[i]=\qfrom[i]$} $R\leftarrow R\cup \{i\}$\label{pull:goal_e}
            \EndIf
            \EndFor
            \EndFor 
            \EndIf\label{pull:goalornot-end}
            \medskip
            \Statex {\small\emph{(main operation)}}
            \State Sort $N(\setqfrom)$ by $\min_{t\in T}(\dist(u,t))$ in ascending order\label{pull:notg_s}
            \For{$u\in N(\setqfrom)$}\label{zhoge4}
            \If{$\nexists j$ s.t. $\current[j]=u$} $R\leftarrow \procname(u,R, \emptyset)$ \label{pull:notg_e}
            \EndIf
            \EndFor
            \State\Return $\current$\label{zhoge5}

            \medskip
            \Statex {\small\emph{(pull agents from $t \in V$)}}
            \Procedure{\procname}{$t,R, V'$}\label{pull:proc-start}
            \State $\reachable\leftarrow \bfs{G[\setcurrent\cup \{t\}\setminus \setcurrent(\reserved)]}{t}$\label{pull:BFS}
            
            \State $\cutset\leftarrow \textsf{Cut}(G[\setcurrent\cup \{t\}])$ \label{pull:DFS}
            \If{$V_S\leftarrow (\reachable\setminus \cutset) \setminus (V'\cup \{t\})= \emptyset$} \Return $R$ \label{pull:judge}
            \EndIf
            \State $\textsf{cur} \leftarrow \argmax_{v \in V_S} ( \min_{y\in T} (\dist(v,y))$\label{pull:select}
            \While{$\textsf{cur} \ne t$}
            \State $i\leftarrow$ agents s.t. $\qfrom[i]=\textsf{cur}$
            \State $Q^{to}[i]\leftarrow \textsf{next}(\textsf{cur},t)$\Comment{determining $\current$}\label{pull:assign}
            \State $R\leftarrow R\cup \{i\}$, $\textsf{cur} \leftarrow Q^{to}[i]$
            \EndWhile 
            \State \Return $R$
            \EndProcedure\label{pull:proc-end}
		\end{algorithmic}
	\end{algorithm}
}

\Cref{alg:PULL} illustrates \algname, which consists of three blocks:
\emph{(i)}~Lines~\ref{pull:proc-start}--\ref{pull:proc-end} denote a subprocedure \procname that ``pulls'' the agents,
\emph{(ii)}~Lines~\ref{pull:notg_s}--\ref{zhoge5} call \procname to repeatedly identify pullable paths, and;
\emph{(iii)}~Lines~\ref{pull:goalornot}--\ref{pull:goalornot-end} handle a configuration that overlaps with the goal configuration by calling \procname with the non-movable, constrained agents.

Our core is the procedure $\procname(t, R, V')$ for $t\in \nei{\qfrom}$, $R\subseteq A$, $V'\subseteq \setqfrom$. 
This computes a path on $\indG{\setqfrom}$ to ``pull'' agents toward a specified vertex $t$, while avoiding the subset of agents $\reserved$ and preserving connectivity.
Furthermore, the path never starts from any vertex in $V'$.
\Cref{fig:execution} provides an example of the whole process to output $\current$ from $\qfrom$.
First, we provide detailed descriptions of \procname.

\begin{figure}[t]
    \centering

\begin{tikzpicture}[scale = 0.67]
    \def\cellwidth{0.6}
    \def\cellheight{0.6}

    \newcommand{\drawgrid}[2]{
        \foreach \i in {0,...,5} {
            \draw (#1,\i * -\cellheight + #2) -- (3 * \cellwidth + #1,\i * -\cellheight + #2);
        }
        \foreach \j in {0,...,3} {
            \draw (\j * \cellwidth + #1,#2) -- (\j * \cellwidth + #1,#2 - 5 * \cellheight);
        }
    }
    \newcommand{\drawnumbers}[2]{
        \foreach \c in {1,...,3} {
            \pgfmathsetmacro{\x}{#2 + (\c - 0.5)*\cellwidth}
            \pgfmathsetmacro{\y}{#1 + 0.3} 
            \node[font=\small] at (\x, \y) {\c};
        }
        \foreach \r in {1,...,5} {
            \pgfmathsetmacro{\x}{#2 - 0.3} 
            \pgfmathsetmacro{\y}{#1 - (\r - 0.5)*\cellheight}
            \node[font=\small] at (\x, \y) {\r};
        }
    }

    \newcommand{\drawredcircles}[3]{
        \foreach \r/\c in #3 {
            \pgfmathsetmacro{\x}{#2 + (\c - 0.5)*\cellwidth}
            \pgfmathsetmacro{\y}{#1 - (\r - 0.5)*\cellheight}
            \fill[red] (\x,\y) circle (5pt);
        }
    }

    \newcommand{\drawlightredcircles}[3]{
        \foreach \r/\c in #3 {
            \pgfmathsetmacro{\x}{#2 + (\c - 0.5)*\cellwidth}
            \pgfmathsetmacro{\y}{#1 - (\r - 0.5)*\cellheight}
            \fill[red!50] (\x,\y) circle (5pt);
        }
    }

    \newcommand{\drawbluecells}[2]{
        \foreach \r/\c in {5/1,4/1,5/2,4/2,3/2,5/3,4/3} {
            \pgfmathsetmacro{\x}{#2 + (\c - 1)*\cellwidth}
        \pgfmathsetmacro{\y}{#1 - (\r - 1)*\cellheight}
        \fill[blue!10] (\x, \y) rectangle ++(\cellwidth, -\cellheight);
        }
    }
    \newcommand{\drawyellowcells}[3]{
        \foreach \r/\c in #3 {
            \pgfmathsetmacro{\x}{#2 + (\c - 1)*\cellwidth}
        \pgfmathsetmacro{\y}{#1 - (\r - 1)*\cellheight}
        \fill[yellow] (\x, \y) rectangle ++(\cellwidth, -\cellheight);
        }
    }

    \newcommand{\drawdashedcircles}[3]{
        \foreach \r/\c in #3 {
            \pgfmathsetmacro{\x}{#2 + (\c - 0.5)*\cellwidth}
            \pgfmathsetmacro{\y}{#1 - (\r - 0.5)*\cellheight}
            \draw[draw=black, fill=white, dashed, thick] (\x,\y) circle (5pt);
        }
    }
     \newcommand{\drawdashedyellowcircles}[3]{
        \foreach \r/\c in #3 {
            \pgfmathsetmacro{\x}{#2 + (\c - 0.5)*\cellwidth}
            \pgfmathsetmacro{\y}{#1 - (\r - 0.5)*\cellheight}
            \draw[draw=black, fill=yellow, dashed, thick] (\x,\y) circle (5pt);
        }
    }

    \newcommand{\drawvarrows}[3]{
        \foreach \rfrom/\cfrom/\rto/\cto in #3 {
            \pgfmathsetmacro{\xfrom}{#2 + (\cfrom - 0.5)*\cellwidth}
            \pgfmathsetmacro{\yfrom}{#1 - (\rfrom - 0.3)*\cellheight}
            \pgfmathsetmacro{\xto}{#2 + (\cto - 0.5)*\cellwidth}
            \pgfmathsetmacro{\yto}{#1 - (\rto - 0.7)*\cellheight}
            \draw[->, thick] (\xfrom,\yfrom) -- (\xto,\yto);
        }
    }

    \newcommand{\drawharrows}[3]{
        \foreach \rfrom/\cfrom/\rto/\cto in #3 {
            \pgfmathsetmacro{\y}{#1 - (\rfrom - 0.5)*\cellheight}
            \pgfmathsetmacro{\xfrom}{#2 + (\cfrom - 0.3)*\cellwidth}
            \pgfmathsetmacro{\xto}{#2 + (\cto - 0.7)*\cellwidth}
            
            \draw[->, thick] (\xfrom,\y) -- (\xto,\y);
        }
    }

    \def\xA{0}  \def\yA{0}
    \drawbluecells{\yA}{\xA}
    \drawgrid{\xA}{\yA}
    \drawnumbers{\yA}{\xA}
    \drawlightredcircles{\yA}{\xA}{{4/1,1/2,4/2,2/1,2/2,2/3,3/2}}
    \drawdashedyellowcircles{\yA}{\xA}{{4/3}}

    \def\xB{3.2}  \def\yB{0}
    \drawbluecells{\yB}{\xB}
    \drawgrid{\xB}{\yB}
    \drawnumbers{\yB}{\xB}
    \drawlightredcircles{\yB}{\xB}{{4/1,2/1,2/3}}
    \drawredcircles{\yB}{\xB}{{2/2,3/2,4/2,4/3}}
    \drawdashedcircles{\yB}{\xB}{{1/2}}
    \drawdashedyellowcircles{\yB}{\xB}{{5/1}}
    \drawvarrows{\yB}{\xB}{{1/2/2/2, 2/2/3/2, 3/2/4/2}}
    \drawharrows{\yB}{\xB}{{4/2/4/3}}

    \def\xC{6.4}  \def\yC{0}
    \drawbluecells{\yC}{\xC}
    \drawgrid{\xC}{\yC}
    \drawnumbers{\yC}{\xC}
    \drawlightredcircles{\yC}{\xC}{{2/1,2/3}}
    \drawredcircles{\yC}{\xC}{{4/1,2/2,3/2,4/2,4/3}}
    \drawdashedcircles{\yC}{\xC}{{1/2}}
    \drawdashedyellowcircles{\yC}{\xC}{{3/1}}
    \drawvarrows{\yC}{\xC}{{1/2/2/2, 2/2/3/2, 3/2/4/2}}
    \drawharrows{\yC}{\xC}{{4/2/4/3}}

    \def\xD{0}  \def\yD{-4.8}
    \drawbluecells{\yD}{\xD}
    \drawyellowcells{\yD}{\xD}{{1/2}}
    \drawgrid{\xD}{\yD}
    \drawnumbers{\yD}{\xD}
    \drawlightredcircles{\yD}{\xD}{{4/1,2/1,2/3,1/2,4/2,2/2,3/2}}
    \drawdashedcircles{\yD}{\xD}{{4/3}}
    \drawvarrows{\yD}{\xD}{{1/2/2/2, 2/2/3/2, 3/2/4/2}}
    \drawharrows{\yD}{\xD}{{4/2/4/3}}

    \def\xE{3.2}  \def\yE{-4.8}
    \drawbluecells{\yE}{\xE}
    \drawgrid{\xE}{\yE}
    \drawnumbers{\yE}{\xE}
    \drawlightredcircles{\yE}{\xE}{{2/1,2/3}}
    \drawredcircles{\yE}{\xE}{{2/2,3/2,4/2,4/3,4/1}}
    \drawdashedcircles{\yE}{\xE}{{1/2}}
    \drawvarrows{\yE}{\xE}{{1/2/2/2, 2/2/3/2, 3/2/4/2}}
    \drawharrows{\yE}{\xE}{{4/2/4/3}}

    \def\xF{6.4}  \def\yF{-4.8}
    \drawyellowcells{\yF}{\xF}{{2/1}}
    \drawbluecells{\yF}{\xF}
    \drawgrid{\xF}{\yF}
    \drawnumbers{\yF}{\xF}
    \drawlightredcircles{\yF}{\xF}{{2/1,2/3}}
    \drawredcircles{\yF}{\xF}{{4/1,2/2,3/2,4/2,4/3}}
    \drawdashedcircles{\yF}{\xF}{{1/2,3/1}}
    \drawvarrows{\yF}{\xF}{{1/2/2/2, 2/2/3/2, 3/2/4/2, 2/1/3/1}}
    \drawharrows{\yF}{\xF}{{4/2/4/3}}
    
    \def\xG{9.6}  \def\yG{0}
    \drawbluecells{\yG}{\xG}
    \drawgrid{\xG}{\yG}
    \drawnumbers{\yG}{\xG}
    \drawlightredcircles{\yG}{\xG}{{2/3}}
    \drawredcircles{\yG}{\xG}{{3/1,4/1,2/2,3/2,4/2,4/3}}
    \drawdashedcircles{\yG}{\xG}{{1/2,2/1}}
    \drawvarrows{\yG}{\xG}{{1/2/2/2, 2/2/3/2, 3/2/4/2, 2/1/3/1}}
    \drawharrows{\yG}{\xG}{{4/2/4/3}}

    \def\xH{9.6}  \def\yH{-4.8}
    \drawbluecells{\yH}{\xH}
    \drawgrid{\xH}{\yH}
    \drawnumbers{\yH}{\xH}
    \drawredcircles{\yH}{\xH}{{3/1,4/1,2/2,3/2,4/2,4/3,3/3}}
    \drawdashedcircles{\yH}{\xH}{{1/2,2/1,2/3}}
    \drawvarrows{\yH}{\xH}{{1/2/2/2, 2/2/3/2, 3/2/4/2, 2/1/3/1,2/3/3/3}}
    \drawharrows{\yH}{\xH}{{4/2/4/3}}

    \node at (0.9, 0.8) {$t = v_{\texttt{4-3}}$};
    \node at (4.1, 0.8) {$t = v_{\texttt{5-1}}$};
    \node at (7.3, 0.8) {$t = v_{\texttt{3-1}}$};

    \node at (0.9, -8.15) {$\mathsf{cur} = v_{\texttt{1-2}}$};
    \node at (4.1, -8.15) {$v_{\texttt{4-1}}$: stay};
    \node at (7.3, -8.15) {$\mathsf{cur} = v_{\texttt{2-1}}$};
    \node at (10.5, -8.15) {output: $\current$};

    \node at (0.9, -3.4) {$V_S\ne \emptyset$};
    \node at (4.1, -3.4) {$V_S=\emptyset$};
    \node at (7.3, -3.4) {$V_S\ne \emptyset$};

    \draw[->, very thick] (0.9, -3.7) -- (0.9, -4.3);
    \draw[->>, very thick] (4.1, -3.7) -- (4.1, -4.3);
    \draw[->, very thick] (7.3, -3.7) -- (7.3, -4.3);
    \draw[->>, very thick] (10.5, -3.2) -- (10.5, -4.3);

    \draw[->, very thick]
        (2, -7) .. controls (3, -7) and (1.7, -2) .. (2.7, -2);
    \draw[->, very thick]
        (5.2, -7) .. controls (6.2, -7) and (4.9, -2) .. (5.9, -2);
    \draw[->, very thick]
        (8.4, -7) .. controls (9.4, -7) and (8.1, -2) .. (9.1, -2);
\end{tikzpicture}

    \caption{An example of the \algname's operation. The graph $G$ is 5$\times$3 grid, and $v_{\texttt{i-j}}$ denotes the vertex with row $i$, column $j$.
    Light red indicates $\qfrom$, and blue represents $T$.
    (\emph{Column 1}) first \algname call $\procname(v_{\texttt{4-3}}, \emptyset,\{v_{\texttt{3-2}},v_{\texttt{4-1}},v_{\texttt{4-2}}\})$ ($v_{\texttt{4-3}}$ is marked by yellow). Then $V_S=\{v_{\texttt{1-2}},v_{\texttt{2-1}},v_{\texttt{2-2}},v_{\texttt{2-3}}\}$, and $\textsf{cur}=v_{\texttt{1-2}}$. \procname returns $\reserved = \{0,2,4,6\}$.
    (\emph{Column 2}) \algname call $\procname(v_{\texttt{5-1}}, \reserved,\{v_{\texttt{3-2}},v_{\texttt{4-1}},v_{\texttt{4-2}}\})$. Then $V_S=\emptyset$, and $\procname$ return $R$. The call $\procname$ with $t=v_{\texttt{5-2}}$ also fails. We set $\reserved= \{0,2,4,5,6\}$.
    (\emph{Column 3}) \algname calls $\procname(v_{\texttt{3-1}},\reserved, \emptyset)$. $V_S=\{v_{\texttt{2-1}}\}$, and \procname returns $\reserved=\{0,1,2,4,5,6\}$.
    (\emph{Column 4}) Repeatedly calling \procname, finally \algname returns $\current=[v_{\texttt{2-2}},v_{\texttt{3-1}},v_{\texttt{3-2}},v_{\texttt{3-3}},v_{\texttt{4-2}},v_{\texttt{4-1}},v_{\texttt{4-3}}]$.
    }
    \label{fig:execution}
\end{figure}

\paragraph{Pull process -- Enumerating available vertices.}
To identify a pullable path, \procname firstly identifies the set of vertices of the graph $\indG{\setcurrent\cup \{t\}}$ which can be reached at $t$ without crossing the agent $i\in \reserved$ (line \ref{pull:BFS}). 
BFS conducts this starting from $t$ over $\indG{\setcurrent\cup \{t\} \setminus \setcurrent(\reserved)}$. 
Since the chain of moves of agents forms a single path (similar to the discussion in \Cref{sec:concept}), this BFS operation characterizes the set of vertices that satisfy a necessary condition to be an initial vertex of the path (denoted by $\reachable$). 

Next, \procname computes the set of cut vertices of $\indG{\setcurrent\cup \{t\}}$ (line \ref{pull:DFS}).  
If a vertex $v\in \setcurrent$ is a cut vertex, then the agent on $v$ is forced to stay on $v$ to maintain the connectivity.
Thus, this operation characterizes the set of vertices that must not be chosen as the initial vertex of the path (denoted by $\cutset$).

Consequently, $\reachable\setminus \cutset$ represents the set of vertices that can be chosen as an initial vertex. 
Line \ref{pull:judge} determines whether the procedure \procname succeeds or not.
If the condition holds, \procname\hspace{-3pt}$(t,\reserved,V')$ fails to find the agents to move, since there is no appropriate vertex.
Then procedure \procname returns the same configuration and set of agents.
If not, it is possible to form a new path that does not overlap with the existing path, while preserving connectivity.  

\paragraph{Pull process -- Selection.}
Next, \procname selects the initial vertex of the pulling path.
From the set of vertices $V_S$, \procname selects the one that is currently farthest from any target vertex, that is, \textsf{cur} holds $\min_{y\in T}(\dist(v,y))\le \min_{y\in T}(\dist(\textsf{cur},y))$ for every $v\in V_S$ (line \ref{pull:select}).
Then \procname starts to determine next positions sequentially;
for an agent $i$ such that $\qfrom[i]=\textsf{cur}$, \procname determines $\current[i]$ as $\textsf{next}(\textsf{cur},t)$, which is already computed in line \ref{pull:BFS}.
Then agent $i$ is added to set $\reserved$ of agents already determined next position, and the procedure moves to the next vertex $\textsf{next}(\textsf{cur},t)=\current[i]$.
After performing the decisions iteratively, \procname generates the next configuration and returns a set $R$ of agents already determined their next position.

\paragraph{Top-level procedure (lines \ref{pull:goalornot}--\ref{pull:notg_e}).}
\algname repeatedly applies \procname and updates $\current$.
Doing so, \algname can generate the configuration that approaches $T$, if $\setqfrom\cap T= \emptyset$.

\paragraph{When $Q^{\text{from}}$ overlaps with $T$.}
In this case, we need to care about which vertex to pull first.
\Cref{fig:sortcc} (left) shows such an example.
When \procname first pulls an agent in $P_2$ at step $t$ and $P_0$ at step $t+1$, the iteration causes a livelock.
To resolve this issue, we prioritize ``blocks of agents that have reached $T$,'' i.e., the components of $\indG{\qfrom\cap T}$, and then \algname perform \procname to enlarge the largest one further (\Cref{fig:sortcc} lower right).

\begin{figure}[t]
    \centering
\begin{tikzpicture}[scale=0.5, thick]

\begin{scope}[xshift = -0.7cm, yshift = 0.5cm]

  \draw[black, fill=blue!20]
    (-1,-0.2) -- (6.5,-0.2) -- (6.5,2.5) -- (-1,2.5) -- cycle;

  \draw[black, fill=red!50]
    (-0.5,4) -- (-0.5,1) -- (1,1) -- (1,3) -- (2,3) -- (2,0.75) 
    -- (3.5,0.75) -- (3.5,3) -- (4.5,3) -- (4.5,0.5) -- (6,0.5) -- (6,4) -- cycle;

    \node[ fill=white] at (0.5,0.4) {$T$};
    \node[font = \large] at (-0.6,4.2) {$\qfrom$};
    \node[ fill=white] at (5.25, 1.75) {\scriptsize{$P_0$}};
    \node[ fill=white] at (2.75, 1.75) {\scriptsize{$P_1$}};
    \node[ fill=white] at (0.25, 1.75) {\scriptsize{$P_2$}};

    \draw[black] (-1,2.5)--(6.5,2.5);
\end{scope}
\begin{scope}[xshift = 7.8cm,yshift=3.5cm]

  \draw[black, fill=blue!20]
    (-1,-0.2) -- (6.5,-0.2) -- (6.5,2.5) -- (-1,2.5) -- cycle;

  \draw[black, fill=red!50]
    (-0.5,4) -- (-0.5,0.5) -- (1,0.5) -- (1,3) -- (2,3) -- (2,0.75) 
    -- (3.5,0.75) -- (3.5,3) -- (4.5,3) -- (4.5,1) -- (6,1) -- (6,4) -- cycle;

    \node[ fill=white] at (
    5,0.4) {$T$};
    \node[font = \large] at (-0.7,4.2) {$\current$};
    \node[ fill=white] at (5.25, 1.75) {\scriptsize{$P_0$}};
    \node[ fill=white] at (2.75, 1.75) {\scriptsize{$P_1$}};
    \node[ fill=white] at (0.25, 1.75) {\scriptsize{$P_2$}};

    \draw[black] (-1,2.5)--(6.5,2.5);
    \filldraw[
    draw=black!5,        % 白い枠線
    line width=0.6pt,  % 枠線の太さ（お好みで）
    fill=black,      % すごく薄い灰色
    fill opacity=0.1   % かなり透明
  ] (-1.5,-0.5) rectangle (7,5);
\end{scope}

\begin{scope}[xshift = 7.8cm, yshift = -2.5cm]

  \draw[black, fill=blue!20]
    (-1,-0.2) -- (6.5,-0.2) -- (6.5,2.5) -- (-1,2.5) -- cycle;

  \draw[black, fill=red!50]
    (-0.2,4) --(-0.2,2.5)--(-0.5,2.5)-- (-0.5,1) -- (1,1)-- (1,2.5) -- (0.7,2.5)--(0.7,3.5) -- (2,3.5) -- (2,2.5)--(1.6,2.5)--(1.6,0.75) 
    -- (3.9,0.75) -- (3.9,2.5)--(3.5,2.5)-- (3.5,3.5) -- (4.5,3.5) --(4.5,2.5)-- (3.9,2.5) -- (3.9,0.2) -- (6.5,0.2) -- (6.5,2.5)-- (6,2.5)-- (6,4) -- cycle;

    \node[ fill=white] at (0.5,0.4) {$T$};
    \node[font = \large] at (-0.8,3.9) {$\current$};
    \node[ fill=white] at (5.25, 1.75) {\scriptsize{$P_0$}};
    \node[ fill=white] at (2.75, 1.75) {\scriptsize{$P_1$}};
    \node[ fill=white] at (0.25, 1.75) {\scriptsize{$P_2$}};

    \draw[black] (-1,2.5)--(6.5,2.5);
\end{scope}
\draw[->] (2,4.8)--(2,6.5)--(6,6.5);
\draw[->] (11,3.5)--(11,2)--(6.5,2);
\draw[->] (2,-0.5)--(2,-1.5)--(6,-1.5);
\node at (9,2.6) {Call in $P_0$};
\node at (4,5.9) {Call in $P_2$};
\node at (4,-2.3) {Call in $P_0$};

\end{tikzpicture}
    \caption{A conceptual illustration of line \ref{pull:sortcc}. (\textbf{Left}) initial configuration. (\textbf{Upper right}) intermediate configuration without prioritization. If there is no proper prioritization, for example, at step $\tau$, $P_2$ pulls the agents of $P_0$, and at the step $\tau+1$, $P_0$ pulls the agents of $P_2$, causing a livelock. (\textbf{Lower right}) itermediate configuration before executing line \ref{pull:notg_s}. In the initial iteration, \algname calls \procname on $\nei{P_0}$, which increases the size of the largest component. Two connected components might merge (e.g., $P_0$ and $P_1$ in the figure), and these are computed as a larger component in the next step.}
    \label{fig:sortcc}
\end{figure}
To be concrete, if $\setqfrom\cap T\ne \emptyset$, we first computes the list of components of $\indG{\setqfrom\cap T}$ (denoted by $\mathcal{P}$). 
Note that even if $\qfrom$ is connected, $\indG{\setqfrom\cap T}$ can consist of multiple components, as illustrated in \Cref{fig:sortcc}.
We next sort the elements of $\mathcal{P}$ by their size of vertex set, and call \procname in decreasing order.
We now observe that, if $\setqfrom\cap T\ne \emptyset$, \algname first calls procedure \procname\hspace{-3pt}$(t,\ast,\ast)$ with $t\in \nei{P_0}$. 
Thus, $\current$ holds that $\indG{\setcurrent\cap T}$ has the larger component than $\indG{\setqfrom\cap T}$ (see \Cref{fig:sortcc}).
Thus, the size of the largest component becomes larger step by step.
Combining this method and process at line \ref{pull:goal_e}, we can prove the convergence.

\subsection{Theoretical Analysis}\label{sec:proof}
We write $\textsf{cur}_{k}$, $\finalk{k}$, $\reachablek{k}$, $\currentk{k}$, $\cutsetk{k}$ and $\reservedk{k}$, where $k$ is the number of calls to the \procname within a single execution of algorithm \ref{alg:PULL} for convinience.
Due to space limitations, we omit the proofs in the paper, and we give some proof sketches for key lemmas. 
See \Cref{sec:omittedproofs} for full proofs.

We first show that our proposed algorithm correctly generates the configuration $\current$ from the input $\qfrom$.
\begin{restatable}{lem}{connectivity}\label{lem: connectivity}
    If $\qfrom$ is connected, then $\current$ is connected, collision-free, and reachable.
\end{restatable}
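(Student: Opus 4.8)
The plan is to prove the three properties by analysing a single invocation of the function \procname and then arguing inductively over the finitely many calls made within one execution of \Cref{alg:PULL}. The observation I would establish first is that although a successful call \procname$(t,\reserved,V')$ shifts a whole chain of agents $v_0=\textsf{cur},v_1,\dots,v_m=t$ (each agent at $v_j$ moving to $v_{j+1}=\textsf{next}(v_j,t)$), its \emph{net} effect on the occupied vertex set $O=\setcurrent$ is simply $O\mapsto(O\setminus\{\textsf{cur}\})\cup\{t\}$: the chain start $\textsf{cur}$ is vacated and the target $t$ is filled, while each intermediate vertex $v_1,\dots,v_{m-1}$ stays occupied by the agent shifted from its predecessor. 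Reducing a many-agent move to this single ``remove $\textsf{cur}$, add $t$'' operation is what makes all three properties tractable.

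For reachability and collision-freeness I would first record the bookkeeping invariant that a non-reserved agent always sits at its original vertex, i.e.\ $\setcurrent[i]=\setqfrom[i]$ for $i\notin\reserved$, which is immediate since only chain moves add agents to $\reserved$. Hence the lookup ``agent $i$ with $\setqfrom[i]=\textsf{cur}$'' always selects the agent currently occupying $\textsf{cur}$, and since every chain vertex except $t$ lies in $\reachable\subseteq(O\cup\{t\})\setminus\setcurrent(\reserved)$ it is occupied by a non-reserved agent. Reachability follows because each moved agent travels along one BFS-tree edge $v_j\to\textsf{next}(v_j,t)\in\nei{v_j}$. For vertex conflicts, the chain shift maps $v_0,\dots,v_{m-1}$ injectively onto the distinct vertices $v_1,\dots,v_m$, which are disjoint from the unmoved agents' vertices, so $\setcurrent$ stays injective. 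No agent moves twice: once moved, its new position lies in $\setcurrent(\reserved)$ and is deleted from the BFS domain of every later call, while its vacated or re-filled original vertex is likewise excluded. Swap conflicts are ruled out because $\dist(\textsf{next}(a,t),t)=\dist(a,t)-1$ forces every move strictly toward $t$, so no two moves reverse one another within a chain, and across chains the vertices touched by an earlier chain are removed from later BFS domains.

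The heart of the proof, and the step I expect to be the main obstacle, is connectivity, which I would prove by induction on the number of \procname calls, keeping the invariant that $\setcurrent$ is connected after each call. The base case is the hypothesis that $\setqfrom$ is connected; a failed call ($V_S=\emptyset$) leaves $\setcurrent$ unchanged. For a successful call, nonemptiness of $V_S=(\reachable\setminus\cutset)\setminus(V'\cup\{t\})$ yields a vertex of $\reachable$ other than $t$, which is an occupied vertex reachable from $t$; hence $t$ attaches to $O$ and $\indG{O\cup\{t\}}$ is connected. Since $\textsf{cur}\in\reachable\setminus\cutset$, the vertex $\textsf{cur}$ is \emph{not} a cut vertex of $\indG{O\cup\{t\}}$, so deleting it preserves connectivity; and because $\textsf{cur}\neq t$ we have $(O\cup\{t\})\setminus\{\textsf{cur}\}=(O\setminus\{\textsf{cur}\})\cup\{t\}=O'$, exactly the new occupied set from the opening observation. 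Thus $\indG{O'}$ is connected, and the final $\setcurrent$ returned by \Cref{alg:PULL} is connected.

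The delicate part is not the topological core---removing a non-cut vertex keeps a connected graph connected---but the bookkeeping that justifies collapsing every chain to the single operation $O\mapsto(O\setminus\{\textsf{cur}\})\cup\{t\}$ uniformly across all calls. I must verify that the $\setqfrom$-based agent lookup, the monotone growth of $\reserved$, and the fact that $t$ is unoccupied at the moment of each call remain mutually consistent, and in particular that the goal-phase restriction $V'=P_i$ plays no role here (it only constrains which vertex may be chosen as $\textsf{cur}$). Once this invariant is secured, the three properties follow as above.
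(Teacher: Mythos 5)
Your proof is correct and follows essentially the same route as the paper's: induction over the \procname calls, the key observation that a successful call changes the occupied vertex set by $O \mapsto (O\setminus\{\textsf{cur}\})\cup\{t\}$, the non-cut-vertex removal argument for connectivity, and exclusion of reserved agents' positions from the BFS domain (plus reduction of swaps to the chain structure) for collision-freeness and reachability. Your explicit verification that $\indG{O\cup\{t\}}$ is connected before applying the cut-vertex argument is a small point the paper's proof leaves implicit, but it is the same approach.
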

\begin{proof}
(sketch)
Regarding reachability, it is clear since the decision of $\current[i]$ is made according to $\current[i]\in N(\qfrom[i])\cup \{\qfrom[i]\}$ for every $i\in A$, from line \ref{pull:assign}.
Furthermore, for each agent $i\in A$, once its destination $\current[i]$ is fixed, it is added to the set $R$, so its destination will not be changed.
From here, we show that $k$-th call of \procname generates a connected, collision-free configuration $\currentk{k}$ from $\qfrom_{k}=\currentk{k-1}$.
Upon a successful pull by \procname, each agent on the $(cur,t)$-path moves one step forward. 
Ignoring labels, this is equivalent to moving an agent at $\textsf{cur}$ to $t$, hence $\currentk{k'+1}=(\currentk{k'}\cup \{t\}\}\setminus \{\textsf{cur})$.
Since $\textsf{cur}$ is not a cut vertex of $\indG{\currentk{k'}\cup \{t\}}$, we can conclude $\currentk{k'+1}$ is connected.
Next, there is no conflict within a procedure in line \ref{pull:assign}. Moreover, in another process, we search paths while avoiding agents in $R$ and destinations $\current(R)$ that have already been determined next positions, thus there is no path crossing with an existing one. Therefore, no conflict occurs throughout, and we conclude that $\current$ is conflict-free.
\end{proof}

Next, we show that $\current\ne \qfrom$, that is, at least one \procname call succeeds.
Note that, for every call of \procname, we can assume that the third argument $V'$ is either empty or forms a connected induced subgraph. 

\begin{restatable}{lem}{firstcall}\label{lem: firstcall}
    Let $\finalk{1}$ be the vertex that \procname$(\finalk{1},\emptyset,V')$ is called in step $\tau$. There is an agent $i$ such that $\current[i] = \finalk{1}$ in line \ref{pull:assign}.
\end{restatable}
\begin{proof}(sketch)
To prove the lemma, we must show that the set $V_S$ (line \ref{pull:judge}) is always non-empty for the first call of \procname, i.e., when $\reserved =\emptyset$, since $\reachable = \qfrom$.
The case when $V'=\emptyset$ is clear since every graph has at least two vertices that are not the cut vertex.
Even after removing $t$, at least one vertex remains; hence $V_s$ is nonempty.

Next, we consider when $V'\ne \emptyset$, that is, \procname is called in line \ref{pull: goal_call}.
The proof is conducted by contradiction.
Consider the vertex $w\in \qfrom\setminus V'$ such that $w$ holds that the largest minimum distance $\min_{t\in V'}(\dist(v,t))$ among every vertex $v\in \qfrom\setminus V'$ ($\spadesuit$).
Suppose that $w$ is a cut vertex of $\indG{\qfrom\cup \{t\}}$ for a contradiction.
Then, the graph $\indG{(\qfrom\cup \{t\})\setminus \{w\}}$ consists of more than one component (\Cref{fig:lem2} left).
Let $C_1$ be a component that contains $V'$ (note that, since $\indG{V'}$ is connected, it is contained in a single component), and consider the vertex $u'$ in another component.
Since $w$ is a cut vertex, every path connecting $u'$ and $v'\in V'$ contains $w$ as an internal vertex, and thus the shortest $(u',v')$-path also does.
Therefore, $\dist(u',v')\ge \dist(w',v')+1$ holds (\Cref{fig:lem2} right); a contradiction to the maximality of $w$ ($\spadesuit$), thus $w$ does not cut $\qfrom$.
\end{proof}

\begin{figure}
    \centering
\begin{tikzpicture}[xscale=0.8, yscale = 0.8, thick]

\node at (2.2,2.6) {$\setqfrom\setminus C_1$};
\node at (-1.6,0) {$C_1$};
\node[fill =red!50,draw, ellipse, minimum width=1.8cm, minimum height=1cm] (C2) at (0,2.6) {};
\node[fill =red!50,draw, ellipse, minimum width=1.8cm, minimum height=1cm] (C1) at (0,0) {};
\begin{scope}
    \clip (-1,-1) rectangle (2,2);
    
        \fill[blue!20] (1.5,0) ++(90:1 and 0.6) arc (90:270:1 and 0.6);
        \fill[draw = blue!20, fill = blue!20] (1.5,-0.6) rectangle (2,0.6);
\end{scope}
\node[draw, ellipse, minimum width=1.8cm, minimum height=1cm] at (0,0) {};
\draw[dashed] ($(C1.center)+(1.5,0)$) ++(132:1 and 0.6) arc (130:230:1 and 0.6);
\node at ($(C1.center)+(0.2,-0.3)$) {$V'$};
\node[fill=white] at ($(C1.center)+(1.5,0)$) {$T$};

\filldraw[fill=red!50, draw=black,dashed] (0,1.3) circle (4pt);
\node at (0.5,1.3) {$w$};

\draw[->, thick] (1.5,1.3) -- (3,1.3);

\begin{scope}[xshift=4.5cm]
  \node at (2.2,2.6) {$\setqfrom\setminus C_1$};
\node at (-1.6,0) {$C_1$};
\node[fill =red!50,draw, ellipse, minimum width=1.8cm, minimum height=1cm] (C2) at (0,2.6) {};
\node[fill =red!50,draw, ellipse, minimum width=1.8cm, minimum height=1cm] (C1) at (0,0) {};
\begin{scope}
    \clip (-1,-1) rectangle (2,2);
    
        \fill[blue!20] (1.5,0) ++(90:1 and 0.6) arc (90:270:1 and 0.6);
        \fill[draw = blue!20, fill = blue!20] (1.5,-0.6) rectangle (2,0.6);
\end{scope}
\node[draw, ellipse, minimum width=1.8cm, minimum height=1cm] at (0,0) {};
\draw[dashed] ($(C1.center)+(1.5,0)$) ++(132:1 and 0.6) arc (130:230:1 and 0.6);
\node at ($(C1.center)+(0.2,-0.3)$) {$V'$};
\node[fill=white] at ($(C1.center)+(1.5,0)$) {$T$};

  \node[circle, draw=black, fill=red!50, inner sep=2pt] (w) at ($ (C1)!0.5!(C2) $) {};
  \node at ($(w)+(0.5,0)$) {$w$};

  \draw (C1.north west) -- (w.south west);
  \draw (C1.north) -- (w.south);
  \draw (C1.north east) -- (w.south east);
  \draw (C2.south west) -- (w.north west);
  \draw (C2.south) -- (w.north);
  \draw (C2.south east) -- (w.north east);

  \node[circle, draw=black, fill=black, inner sep=2pt] (u) at ($(C2.center)+(0.3,-0.1)$) {};
  \node at ($(u)+(0.4,0)$) {$u'$};
  \node[circle, draw=black, fill=black, inner sep=2pt] (v) at ($(C1.center)+(0.7,0.1)$) {};
  \node at ($(v)+(-0.4,0.2)$) {$v'$};

  \node[align=center, text width=2cm] at (2.3,1.3) {$\dist(v',w)+1$\\$\le \dist(v',u') $};

\end{scope}

\end{tikzpicture}
    \caption{A brief explanation of the proof in \Cref{lem: firstcall}.    (\emph{Left}) Graph $\indG{\setqfrom \cup \{\finalk{1}\} \setminus \{w\}}$ is divided into multiple components, shown as two for convenience.
    (\emph{Right}) A graph $\indG{\setqfrom \cup \{\finalk{1}\}}$ has a single component, and every $(v',u')$-paths must pass through $w$ since $w$ is a cut vertex.}
    \label{fig:lem2}
\end{figure}
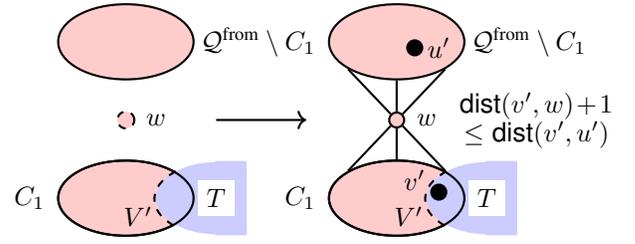

From \Cref{lem: firstcall}, we can show the following lemmas, which are necessary to guarantee the completeness of our algorithm.
The first case is when $\qfrom$ does not overlap $T$.
As the approach introduced in \Cref{sec:concept}, we show that at least one of the agents that are currently closest to $T$ moves in a direction that reduces its distance to $T$ by one.

\begin{restatable}{lem}{goforward}\label{lem:goforward}
    If $\setqfrom\cap T = \emptyset$, then 
    \begin{align*}
        \min_{(s,t)\in \setcurrent\times T} \dist(s,t)=\min_{(s,t)\in \setqfrom\times T}\dist(s,t)-1
    \end{align*}
\end{restatable}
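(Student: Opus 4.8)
The plan is to prove the equality by pinning down matching upper and lower bounds on the left-hand side. Write $d := \min_{(s,t)\in \setqfrom\times T}\dist(s,t)$; since $\setqfrom\cap T=\emptyset$ we have $d\ge 1$. My first step is to characterize the neighborhood: I claim $\min_{u\in \nei{\setqfrom}}\min_{t\in T}\dist(u,t)=d-1$. The bound $\ge d-1$ is immediate, because any $u\in\nei{\setqfrom}$ is adjacent to some $s\in\setqfrom$ with $\dist(s,T)\ge d$, forcing $\dist(u,T)\ge d-1$. For achievability I take a minimizing pair $(s^*,t^*)$ and set $w:=\textsf{next}(s^*,t^*)$; by the remark preceding the algorithm $\dist(w,t^*)=d-1$, and $w$ is adjacent to $s^*$. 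If $w$ lay in $\setqfrom$ it would have $\dist(w,T)\ge d$, contradicting $\dist(w,T)\le d-1$; hence $w\in\nei{\setqfrom}$ realizes the value $d-1$.

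For the upper bound, I use that the top-level loop (lines \ref{pull:notg_s}--\ref{pull:notg_e}) sorts $\nei{\setqfrom}$ in ascending order of distance to $T$, so its first target $t_1$ satisfies $\dist(t_1,T)=d-1$ by the previous step. Because $\setqfrom\cap T=\emptyset$, the reservation set is still $\reserved=\emptyset$ when this first call is made, so \Cref{lem: firstcall} applies with $V'=\emptyset$ and guarantees an assignment onto $t_1$. I then argue $t_1$ stays occupied: the agent placed on $t_1$ is added to $\reserved$, and every subsequent BFS at line \ref{pull:BFS} deletes the reserved positions $\setcurrent(\reserved)$, so no later call can vacate $t_1$. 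Hence $t_1\in\setcurrent$ at termination and $\min_{(s,t)\in\setcurrent\times T}\dist(s,t)\le d-1$.

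For the lower bound I would prove the invariant $\setcurrent\subseteq \setqfrom\cup\nei{\setqfrom}$ by induction on the number of \procname calls. Initially $\setcurrent=\setqfrom$; each call pulls agents along a path contained in $G[\setcurrent\cup\{t\}]$, where the target $t$ is drawn from the fixed set $\nei{\setqfrom}$, so every newly occupied vertex already lies in $\setqfrom\cup\nei{\setqfrom}$. Every vertex of $\setqfrom$ is at distance $\ge d$ from $T$ and every vertex of $\nei{\setqfrom}$ at distance $\ge d-1$, so every occupied vertex of $\setcurrent$ is at distance $\ge d-1$ from $T$, yielding $\min_{(s,t)\in\setcurrent\times T}\dist(s,t)\ge d-1$. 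Combining the two bounds gives the claimed equality.

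I expect the invariant $\setcurrent\subseteq\setqfrom\cup\nei{\setqfrom}$ to be the main obstacle, since it requires a careful reading of how \procname mutates the configuration across several calls: one must confirm that all targets are drawn from the original $\nei{\setqfrom}$ (not a recomputed neighborhood), that intermediate path vertices never escape $\setqfrom\cup\nei{\setqfrom}$, and that reserved vertices are never re-vacated. The remaining steps are routine triangle-inequality arguments for distances together with a direct appeal to \Cref{lem: firstcall}.
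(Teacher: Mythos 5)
Your proposal is correct and follows essentially the same route as the paper: both hinge on taking the first vertex $u_0$ in the neighborhood $\nei{\setqfrom}$ sorted by ascending distance to $T$ (which realizes distance $d-1$), noting that $\reserved=\emptyset$ at that point, and invoking \Cref{lem: firstcall} to guarantee an agent is assigned onto it. The extra steps you flag as work --- the persistence of the occupied vertex under later \procname calls and the lower bound via $\setcurrent\subseteq\setqfrom\cup\nei{\setqfrom}$ --- are left implicit in the paper (the latter follows directly from the reachability part of \Cref{lem: connectivity}), so your write-up is simply a more explicit version of the same argument.
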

Then there exists a step $\tau'$ such that $\setconf{\tau'}$ overlaps $T$.
Thus, the condition at line \ref{pull:goal_s} is satisfied at some step.
From here, let $p^{\max}_{\tau}$ be the size of the largest component of $G[\setconf{\tau}\cap T]$.
We finalise the proof by showing that $p^{\max}_{\tau}$ is strictly increasing.
\begin{restatable}{lem}{convergence}\label{lem: convergence}
    If $\setconf{\tau}\cap T \ne \emptyset$ at step $\tau$, $p^{\max}_{\tau+1}\ge p^{\max}_{\tau}+1$.
\end{restatable}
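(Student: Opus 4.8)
The plan is to exploit the fact that \algname processes the connected components of $\indG{\setqfrom\cap T}$ in descending order of size (line~\ref{pull:sortcc}), so the largest component $P_0$, with $|P_0|=p^{\max}_{\tau}$, is handled first while $\reserved$ is still empty. I would establish two things: that the very first \procname call of the step pulls an agent onto a target vertex adjacent to $P_0$, and that this vertex together with all of $P_0$ remains occupied for the \emph{entire} step. This yields a connected subset of $\setconf{\tau+1}\cap T$ of size $p^{\max}_{\tau}+1$, which is exactly the claim. Throughout I assume we are not already at the goal, so $\setqfrom\cap T\subsetneq T$ and hence $P_0\subsetneq T$.

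First I would produce a pullable target neighbour. Since any valid plan terminates at the connected configuration $T$, the graph $\indG{T}$ is connected; as $P_0\subsetneq T$, the cut $(P_0,\,T\setminus P_0)$ is crossed by some edge, whose endpoint in $T\setminus P_0$ lies in $\nei{P_0}\cap T$, so this set is non-empty. Every such vertex is unoccupied: an occupied $v\in\nei{P_0}\cap T$ would lie in $\setqfrom\cap T$ and, being adjacent to $P_0$, would belong to the component $P_0$, contradicting $v\notin P_0$. Thus the first vertex $v'$ scanned in the loop at line~\ref{zhoge2} for $P_0$ triggers $\procname(v',\emptyset,P_0)$ with $\reserved=\emptyset$. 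Because $P_0$ is connected and $v'\in\nei{P_0}$, \Cref{lem: firstcall} applies and guarantees that this call reaches the assignment at line~\ref{pull:assign}, placing an agent on $v'$.

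Next I would show $P_0\cup\{v'\}$ is frozen. By the net-effect observation from the proof of \Cref{lem: connectivity}, each successful \procname call only updates the occupied set by $\setcurrent\leftarrow(\setcurrent\setminus\{\textsf{cur}\})\cup\{t\}$, where $\textsf{cur}\in V_S$ is excluded from $V'$ and from $\{t\}$. The key is that the vacated $\textsf{cur}$ never lies in $P_0\cup\{v'\}$. While $P_0$ is being processed this holds because $V'=P_0$ is excluded from $V_S$, and $v'$—occupied by the reserved agent of the first call—is removed from $\reachable$ by the BFS at line~\ref{pull:BFS} and hence from $V_S$. Once $P_0$ is finished, the loop at line~\ref{pull:goal_e} reserves every agent that stayed on a vertex of $P_0$, while every agent that moved onto $P_0\cup\{v'\}$ is already in $\reserved$ from the \procname calls; hence all occupants of $P_0\cup\{v'\}$ are reserved, so their vertices are excluded from $\reachable$ in all later calls (for the smaller components and in the loop at line~\ref{pull:notg_e}), keeping $\textsf{cur}\notin P_0\cup\{v'\}$. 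Therefore $P_0\cup\{v'\}\subseteq\setconf{\tau+1}$. Since $P_0\subseteq T$ and $v'\in T$, we get $P_0\cup\{v'\}\subseteq\setconf{\tau+1}\cap T$, and $\indG{P_0\cup\{v'\}}$ is connected, so the largest component of $\indG{\setconf{\tau+1}\cap T}$ has size at least $p^{\max}_{\tau}+1$.

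The hard part will be the freezing argument of the third paragraph: verifying that $P_0\cup\{v'\}$ is never disturbed across the \emph{whole} step, in particular by the calls made for the other components $P_i$ and in the non-goal loop, where $V'\neq P_0$. This is precisely where the reservation bookkeeping is essential, and the subtle point is the \emph{timing}—the stayers on $P_0$ are reserved only after all of $P_0$'s neighbours have been scanned (line~\ref{pull:goal_e}), so I must argue separately that during $P_0$'s own neighbour loop the exclusion of $V'=P_0$ from $V_S$ already suffices, and only afterwards rely on $\reserved$.
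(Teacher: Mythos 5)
Your proposal is correct and follows essentially the same route as the paper's proof: the first \procname call for $P_0$ (with $\reserved=\emptyset$, $V'=P_0$) succeeds by \Cref{lem: firstcall} and places an agent on some $v'\in N(P_0)\cap T$, and an induction over the remaining calls—using the $V'$-exclusion in $V_S$ during $P_0$'s own loop and the reservation bookkeeping (line~\ref{pull:goal_e} plus the reservations made inside \procname) afterwards—shows $P_0\cup\{v'\}$ stays occupied, giving $p^{\max}_{\tau+1}\ge p^{\max}_{\tau}+1$. In fact your write-up is somewhat more careful than the paper's on two points it glosses over: that $N(P_0)\cap T$ is non-empty and unoccupied (via connectivity of $\indG{T}$ and the component-maximality of $P_0$), and the timing of when $P_0$'s protection switches from the $V'$-exclusion to the reservations.
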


Then there is a step $\tau^\star$, where $p^{max}_{\tau^\star} = n$. At step $\tau^\ast$, it holds that $\qfrom_{\tau^\star} = T$, which implies that a plan $[\setconf{0},\setconf{1},...,\setconf{\tau^{\star}}]$ is valid.
Furthermore, $\tau^\star$ has an upper bound: $\tau'+(n-1)$. This immediately leads to the following. 
\begin{thm}\label{thm: complete}
    \algname is complete for \prbname.
\end{thm}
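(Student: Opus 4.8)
The plan is to establish completeness by proving two separate facts: that \algname halts after finitely many steps, and that the configuration sequence it emits is a valid plan from $S$ to $T$. The per-step validity is already delivered by \Cref{lem: connectivity}, so the bulk of the theorem reduces to bounding the number of steps and identifying the terminal configuration with $T$. I would organize the argument around two monotone potential functions, one governing each phase of the run, and then read off a finite step bound.

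First I would treat the phase in which the current configuration is disjoint from the goals. If $\setconf{0}\cap T=\emptyset$, then by \Cref{lem:goforward} the potential $\min_{(s,t)\in\setconf{\tau}\times T}\dist(s,t)$ strictly decreases by exactly one at every step. Since this is a nonnegative integer bounded above by $\diam(G)$, it must reach $0$ after at most $\diam(G)$ steps, and the event of reaching $0$ is precisely $\setconf{\tau'}\cap T\neq\emptyset$ for some step $\tau'\le\diam(G)$. If instead $\setconf{0}\cap T\neq\emptyset$ already, I set $\tau'=0$ and skip directly to the second phase. From step $\tau'$ onward I switch to the potential $p^{\max}_{\tau}$, the size of the largest component of $G[\setconf{\tau}\cap T]$. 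By \Cref{lem: convergence}, whenever $\setconf{\tau}\cap T\neq\emptyset$ this potential grows by at least one per step; moreover its growth keeps it at least $1$, so $\setconf{\tau}\cap T$ stays nonempty and \Cref{lem: convergence} reapplies by induction. As $p^{\max}_{\tau}\le n$, there must be a step $\tau^\star\le\tau'+(n-1)$ with $p^{\max}_{\tau^\star}=n$.

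Next I would convert the terminal condition $p^{\max}_{\tau^\star}=n$ into $\setconf{\tau^\star}=T$ by a counting argument. The largest component of $G[\setconf{\tau^\star}\cap T]$ has $n$ vertices, all lying in $T$; since $|T|=n$, this forces $\setconf{\tau^\star}\cap T=T$, i.e. $T\subseteq\setconf{\tau^\star}$. Because the $n$ agents occupy $n$ distinct vertices, $|\setconf{\tau^\star}|=n=|T|$, whence $\setconf{\tau^\star}=T$. Together with $\setconf{0}=S$, this yields a finite sequence $[\setconf{0}=S,\dots,\setconf{\tau^\star}=T]$ of makespan at most $\diam(G)+n-1$.

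Finally I would verify the five validity conditions. Condition~1 holds by the endpoints just established, and the reachability, conflict-avoidance, and connectivity requirements (conditions 2--5) follow by applying \Cref{lem: connectivity} inductively: $\setconf{0}=S$ is connected by hypothesis, and since one invocation of the generator sends a connected configuration to one that is connected, collision-free, and reachable, every $\setconf{\tau}$ on the plan inherits these properties. The step I expect to require the most care is the hand-off between the two potentials --- making the switch from the distance-decreasing regime to the component-growing regime occur exactly at the first-overlap step $\tau'$ and confirming that overlap, once gained, is never lost so that no step of the run falls outside the scope of either lemma. This persistence is what the component-sorting at line~\ref{pull:sortcc} and the goal-reservation at line~\ref{pull:goal_e} are engineered to guarantee, and it is already subsumed by the monotonicity in \Cref{lem: convergence}; consequently the theorem-level proof is mainly the clean assembly of the two potentials and extraction of the step bound.
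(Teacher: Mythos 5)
Your proposal is correct and follows essentially the same route as the paper: the paper likewise chains \Cref{lem:goforward} (distance potential strictly decreasing until $\setconf{\tau'}\cap T\neq\emptyset$, giving $\tau'\le\diam(G)$) with \Cref{lem: convergence} (largest-component potential $p^{\max}_{\tau}$ growing to $n$ within $n-1$ further steps), identifies $p^{\max}_{\tau^\star}=n$ with $\setconf{\tau^\star}=T$, and invokes \Cref{lem: connectivity} for per-step validity, yielding the same $\diam(G)+n-1$ bound. Your write-up merely makes explicit two details the paper leaves implicit (the counting argument that $p^{\max}=n$ forces $\setconf{\tau^\star}=T$, and the persistence of $\setconf{\tau}\cap T\neq\emptyset$ once gained), so no substantive difference exists.
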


\begin{prop}\label{prop: upperbound}
The makespan of \prbname is bounded by $\diam(G) + n-1$, where $\diam(G)\coloneqq\max_{(s,t)\in V^2}\dist(s,t)$.
\end{prop}

For the time complexity of \algname, we have the following:
\begin{restatable}{prop}{runtime}\label{prop: runtime}
    \algname runs in $O(\Delta ^2 n^2)$ time, where $\Delta$ is the maximum degree of $G$.
\end{restatable}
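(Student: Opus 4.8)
The plan is to bound the cost of one step---a single top-level invocation of \algname that turns $\qfrom$ into $\current$---by separately bounding the number of calls to \procname and the cost of each call, then taking the product. The crucial structural fact is that every graph search inside a step runs on the induced subgraph $\indG{\setcurrent\cup\{t\}}$ (or on an open neighborhood of the occupied set), which has at most $n+1$ vertices and, since each vertex has degree at most $\Delta$, at most $O(n\Delta)$ edges. I would also precompute, once before the algorithm begins, $\min_{y\in T}\dist(v,y)$ for every vertex $v$ by a single multi-source BFS from $T$ on $G$; as $G$ and $T$ are fixed across all steps, this is a one-time cost, and every later evaluation of $\min_{y\in T}\dist(v,y)$ is an $O(1)$ lookup, so it does not enter the per-step bound.

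First I would show that one call \procname$(t,\reserved,V')$ costs $O(n\Delta)$. Building and traversing the induced subgraph for the BFS on line~\ref{pull:BFS} is $O(n\Delta)$ (listing the in-subgraph neighbors of each of the $\le n$ occupied vertices), and this simultaneously produces $\reachable$ together with the parent pointers that realize $\textsf{next}(\cdot,t)$. The cut-vertex DFS on line~\ref{pull:DFS} is linear on the same $O(n\Delta)$-size subgraph, hence $O(n\Delta)$. Forming $V_S=(\reachable\setminus\cutset)\setminus(V'\cup\{t\})$ with boolean arrays indexed by vertex is $O(n)$, and the $\argmax$ on line~\ref{pull:select} scans $V_S$ with the precomputed distances in $O(|V_S|)=O(n)$. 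The assignment loop walks parent pointers toward $t$; because $\dist(\textsf{cur},t)$ strictly decreases each iteration and is at most $n$, it runs $O(n)$ times with $O(1)$ work each (agent-on-vertex queries are kept in a table). So a call is dominated by the two linear searches and costs $O(n\Delta)$.

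Next I would bound the number of \procname calls per step by $O(n\Delta)$. In the target-handling phase, \procname fires once per vertex of $\bigcup_i(\nei{P_i}\cap T)$; since the $P_i$ partition $\setqfrom\cap T$ and $|\nei{P_i}|\le\Delta|P_i|$, this is at most $\Delta\sum_i|P_i|=\Delta|\setqfrom\cap T|\le\Delta n$. In the main phase (lines~\ref{pull:notg_s}--\ref{pull:notg_e}), \procname fires at most once per vertex of $\nei{\setqfrom}$, and $|\nei{\setqfrom}|\le\Delta n$. Hence there are $O(n\Delta)$ calls. The only other nontrivial per-step work, sorting $\nei{\setqfrom}$ on line~\ref{pull:notg_s}, costs $O(n\Delta\log(n\Delta))$ and is dominated. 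Multiplying, the step costs $O(n\Delta)\cdot O(n\Delta)=O(\Delta^2n^2)$.

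The part I expect to be the main obstacle is the accounting that keeps each call genuinely $O(n\Delta)$ instead of $O(|V|+|E|)$: one must verify that the searches on lines~\ref{pull:BFS} and~\ref{pull:DFS} operate on the induced subgraph of the occupied set (so their size is controlled by $n$ and $\Delta$, not by $|V|$), and that every distance $\min_{y\in T}\dist(\cdot,y)$ consumed by the $\argmax$ and by the outer sort is read from the global precomputation rather than recomputed. Once these two points are fixed, the rest of the bound is routine.
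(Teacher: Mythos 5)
Your proof is correct and follows essentially the same route as the paper's: it bounds each \procname call by $O(\Delta n)$ (BFS/DFS on the induced subgraph of the at most $n+1$ occupied vertices, plus $O(n)$ for selection and assignment), bounds the number of calls per step by $O(\Delta n)$, and multiplies, with the target distances handled by one-time preprocessing. The only difference is cosmetic: the paper sorts $N(\setqfrom)$ with a bucket sort over $O(n)$ buckets (using that the relevant distances lie in an interval of length $O(n)$), whereas you use a comparison sort at $O(\Delta n\log(\Delta n))$, which, as you note, is equally dominated by the $O(\Delta^2 n^2)$ total.
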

\begin{proof}(sketch)
The worst-case running time can be computed from an upper bound on the number of times \procname is invoked and the time required for a single execution of \procname.
    The number of times \procname is invoked is at most $|N[\qfrom]|\le \Delta n$.
The running time of a single execution of \procname is dominated by the BFS, DFS, and is $O(|\qfrom| + |E(\indG{\qfrom})|)= O(\Delta n)$.
Therefore, the running time is upper bounded by $O(\Delta n )\times O(\Delta n)=O(\Delta^2n^2)$.
\end{proof}
Combining \Cref{prop: upperbound,prop: runtime}, we have the following.
\begin{cor}\label{cor:plantime}
    \algname solves \prbname in $O(|V|\cdot\Delta^2 n^2)$ time.
\end{cor}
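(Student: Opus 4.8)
The plan is to combine the makespan bound of \Cref{prop: upperbound} with the per-step running time of \Cref{prop: runtime}, viewing \algname as a configuration generator invoked exactly once per time step. Since the iterated application of \algname from $S$ yields a valid plan, the number of invocations equals the makespan, and the total cost decomposes as (number of steps) $\times$ (cost per step) plus a one-time preprocessing term.

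First I would count the number of calls. By \Cref{thm: complete}, repeatedly applying \algname from $S$ produces a valid plan $[\conf{0}, \dots, \conf{\tau^\star} = T]$, and by \Cref{prop: upperbound} its makespan satisfies $\tau^\star \le \diam(G) + n - 1$; hence a full \algname call is performed at most $\diam(G) + n - 1$ times. Next I would bound this quantity in terms of $|V|$. Since $G$ is connected on $|V|$ vertices, $\diam(G) \le |V| - 1$, and since vertex conflicts are forbidden, every configuration occupies $n$ distinct vertices, so $n \le |V|$. Therefore $\diam(G) + n - 1 = O(|V|)$.

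Then I would attach the per-step cost. By \Cref{prop: runtime}, each invocation of \algname runs in $O(\Delta^2 n^2)$ time, so the work summed over all steps is $O(|V|) \cdot O(\Delta^2 n^2) = O(|V| \cdot \Delta^2 n^2)$. Finally I would account for the distance values $\min_{t \in T}\dist(u,t)$ used in the selection step (line~\ref{pull:select}): because the target set $T$ is fixed throughout the execution, these can be precomputed once for all $u \in V$ in $O(|V| + |E|) = O(\Delta |V|)$ time via a single multi-source BFS (equivalently, a BFS from a virtual super-source adjacent to every vertex of $T$). This one-time cost is dominated by $O(|V| \cdot \Delta^2 n^2)$, since $\Delta |V| \le |V| \cdot \Delta^2 n^2$. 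Summing the preprocessing and the main loop gives the claimed bound $O(|V| \cdot \Delta^2 n^2)$.

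The proof is a direct composition of the two preceding results, so there is no deep obstacle; the only points requiring care are (i) confirming that both $\diam(G)$ and $n$ are $O(|V|)$ so that the step count collapses to $O(|V|)$, and (ii) making sure the distance evaluation is genuinely a one-time preprocessing cost rather than being re-incurred inside each step, so that it does not inflate the asymptotics.
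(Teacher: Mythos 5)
Your proposal is correct and follows essentially the same route as the paper: the paper also obtains the corollary by combining the makespan bound of \Cref{prop: upperbound} with the per-step cost of \Cref{prop: runtime}, together with the remark that the distances $\min_{t\in T}\dist(u,t)$ are precomputed once in $O(\Delta|V|)$ time. Your explicit observations that $\diam(G)\le |V|-1$ and $n\le |V|$ (so the step count is $O(|V|)$) merely spell out details the paper leaves implicit.
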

Note that, the distance $\min_{t\in T}\dist(u,t)$ for every $u\in V(G)$, used in lines \ref{pull:notg_s} and \ref{pull:select}, can be precomputed in $O(|V|+|E|)=O(\Delta |V|)$ time.

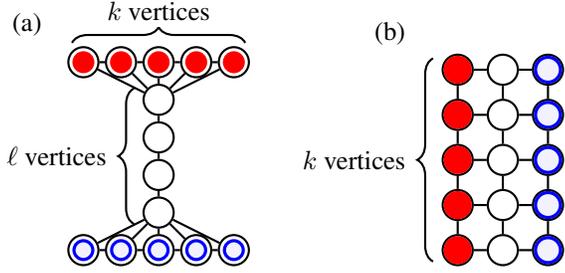
\begin{figure}[tbp]
  \begin{minipage}[b]{0.4\columnwidth}
    \centering
    \begin{tikzpicture}[every node/.style={circle, draw, minimum size=4mm, inner sep=0pt}, thick, scale = 0.5]

    \def\k{5}
    \def\l{4}

    \foreach \i in {1,...,\k} {
        \node[fill=white] (s\i) at (\i,2) {};
    }

    \foreach \i in {1,...,\l} {
        \node (d\i) at (3,2-\i) {};
    }

    \foreach \i in {1,...,\k} {
        \node[fill=white] (t\i) at (\i, -3) {};
    }

    \foreach \i in {1,...,\numexpr\k-1} {
        \pgfmathtruncatemacro{\j}{\i+1}
        \draw (s\i) -- (s\j);
    }

    \foreach \i in {1,...,\k} {
        \draw (s\i) -- (d1);
    }

    \foreach \i in {1,...,\numexpr\l-1} {
        \pgfmathtruncatemacro{\j}{\i+1}
        \draw (d\i) -- (d\j);
    }

    \foreach \i in {1,...,\k} {
        \draw (d\l) -- (t\i);
    }

    \foreach \i in {1,...,\numexpr\k-1} {
        \pgfmathtruncatemacro{\j}{\i+1}
        \draw (t\i) -- (t\j);
        }

    \foreach \i in {1,...,\k} {
        \node[draw =blue, very thick, fill=blue!5, minimum size=2.5mm] (t\i) at (\i, -3) {};
    }
     \foreach \i in {1,...,\k} {
        \node[draw =red, very thick, fill=red, minimum size=2.5mm] (t\i) at (\i, 2) {};
    }

    \draw[decorate, decoration={brace, amplitude=6pt, raise = 3pt}, thick]
  (s1.north west) -- (s5.north east)
  node[midway, yshift=15pt, rectangle, draw=none, fill=white, inner sep=2pt]
  {$k$ vertices};

  \draw[decorate, decoration={brace, amplitude=6pt, raise = 5pt}, thick]
  (d\l.south west) -- (d1.north west)
  node[midway, xshift=-13pt, rectangle, draw=none, fill=white, inner sep=2pt, anchor=east]
  {$\ell$ vertices};
  \node[draw=none] at (-0.5,3) {(a)};

\end{tikzpicture}
  \end{minipage}
  \hspace{0.05\columnwidth}
  \begin{minipage}[b]{0.42\columnwidth}
    \centering
    \begin{tikzpicture}[scale = 0.6, thick]
\node[circle, draw, minimum size=4mm] (v11) at (1, -1) {};
\node[circle, draw, minimum size=4mm] (v12) at (2, -1) {};
\node[circle, draw, minimum size=4mm] (v13) at (3, -1) {};

\node[circle, draw, minimum size=4mm] (v21) at (1, -2) {};
\node[circle, draw, minimum size=4mm] (v22) at (2, -2) {};
\node[circle, draw, minimum size=4mm] (v23) at (3, -2) {};

\node[circle, draw, minimum size=4mm] (v31) at (1, -3) {};
\node[circle, draw, minimum size=4mm] (v32) at (2, -3) {};
\node[circle, draw, minimum size=4mm] (v33) at (3, -3) {};

\node[circle, draw, minimum size=4mm] (v41) at (1, -4) {};
\node[circle, draw, minimum size=4mm] (v42) at (2, -4) {};
\node[circle, draw, minimum size=4mm] (v43) at (3, -4) {};

\node[circle, draw, minimum size=4mm] (v51) at (1, -5) {};
\node[circle, draw, minimum size=4mm] (v52) at (2, -5) {};
\node[circle, draw, minimum size=4mm] (v53) at (3, -5) {};

\draw (v11) -- (v12) -- (v13);
\draw (v21) -- (v22) -- (v23);
\draw (v31) -- (v32) -- (v33);
\draw (v41) -- (v42) -- (v43);
\draw (v51) -- (v52) -- (v53);

\draw (v11) -- (v21) -- (v31) -- (v41) -- (v51);
\draw (v12) -- (v22) -- (v32) -- (v42) -- (v52);
\draw (v13) -- (v23) -- (v33) -- (v43) -- (v53);

\node[draw=red, very thick, fill=red, minimum size=1.5mm, circle] at (v11) {};
\node[draw=red, very thick, fill=red, minimum size=2.5mm, circle] at (v21) {};
\node[draw=red, very thick, fill=red, minimum size=2.5mm, circle] at (v31) {};
\node[draw=red, very thick, fill=red, minimum size=2.5mm, circle] at (v41) {};
\node[draw=red, very thick, fill=red, minimum size=2.5mm, circle] at (v51) {};

\node[draw=blue, very thick, fill=blue!5, minimum size=2.5mm, circle] at (v13) {};
\node[draw=blue, very thick, fill=blue!5, minimum size=2.5mm, circle] at (v23) {};
\node[draw=blue, very thick, fill=blue!5, minimum size=2.5mm, circle] at (v33) {};
\node[draw=blue, very thick, fill=blue!5, minimum size=2.5mm, circle] at (v43) {};
\node[draw=blue, very thick, fill=blue!5, minimum size=2.5mm, circle] at (v53) {};

\draw[decorate, decoration={brace, amplitude=6pt, mirror, raise = 5pt}, thick]
  (v11.north west) -- (v51.south west)
  node[midway, xshift=-13pt, rectangle, draw=none, fill=white, inner sep=2pt, anchor=east]
  {$k$ vertices};
\node[draw=none] at (-0.5,-0.2) {(b)};
\end{tikzpicture}
  \end{minipage}
  \caption{(a) An example of a tight instance: the optimal makespan is $\diam(G)+n-1$. (b) An example of a adversarial instance: \algname outputs plan with $O(n)$ makespan, while the optimal makespan is two.}
  \label{fig: instance}
\end{figure}

\paragraph{Tightness, Adversarial instances.}

The makespan upper bound established is tight; an example is shown in \Cref{fig: instance}(a).
Furthermore, we show that for some instances, \algname outputs a plan with a makespan of $ O(n)$, even when the optimal makespan is two (see \Cref{fig: instance} (b)). 

\section{Empirical Analysis}
\begin{figure*}[t]
    \centering
    
    \begin{tikzpicture}
    \node[anchor=north west, inner sep=0] (image) at (0,5.8) {
        \includegraphics[width=\linewidth]{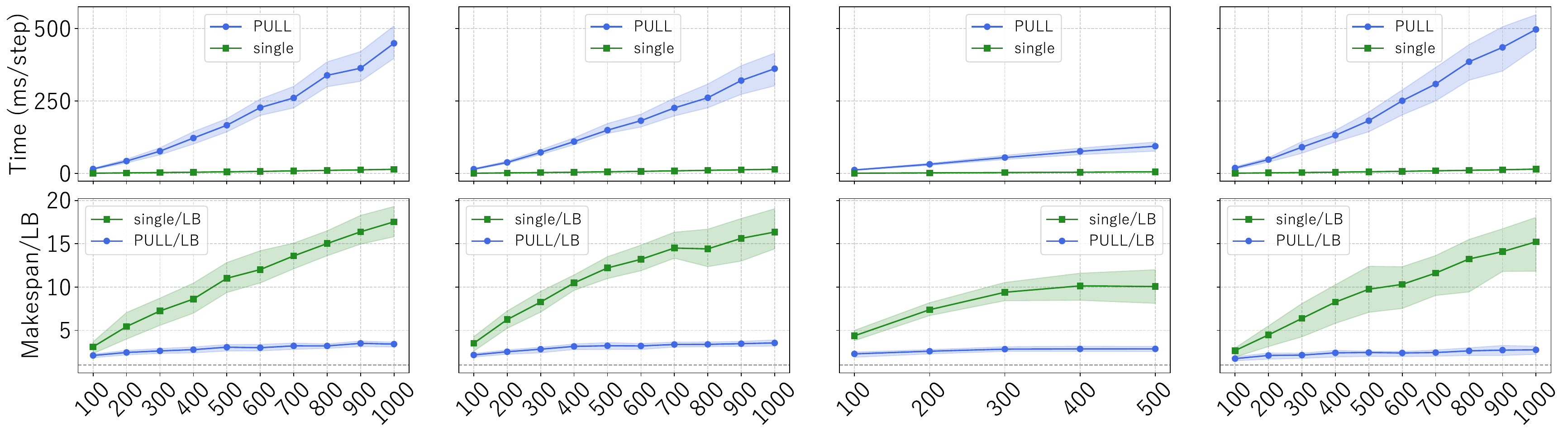}
    };
    \node[font =\small] at (2.8,5.9) {\emph{random-64-64-20}};
    \node[anchor=south west, inner sep=0](image) at (1.1,4.6) {\includegraphics[width = 0.055\linewidth]{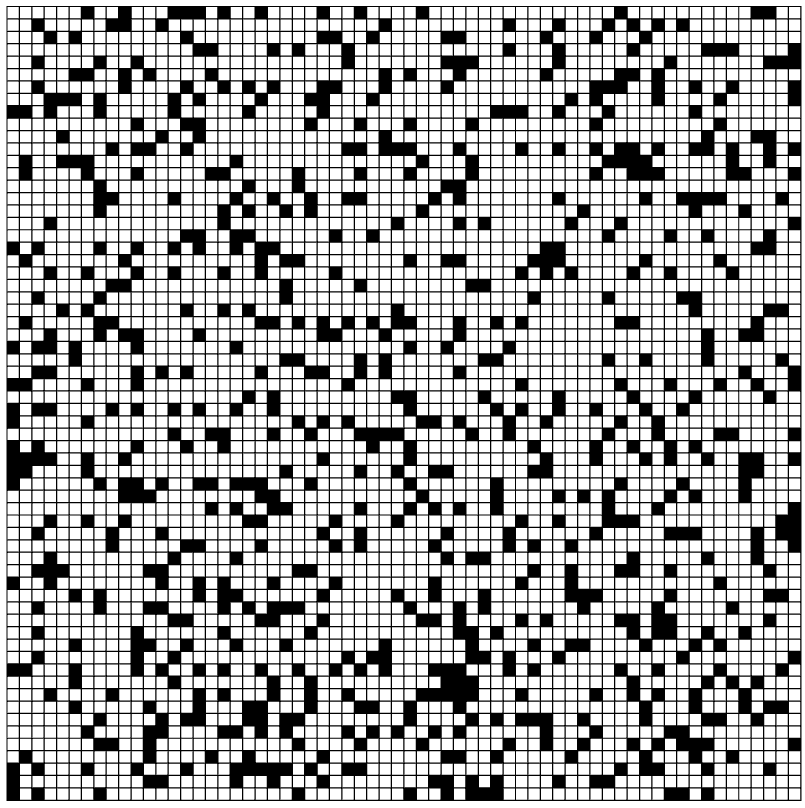}};
    \node[font =\small] at (7.1,5.9) {\emph{random-48-48-20}};
    \node[anchor=south west, inner sep=0](image) at (5.4,4.6) {\includegraphics[width = 0.055\linewidth]{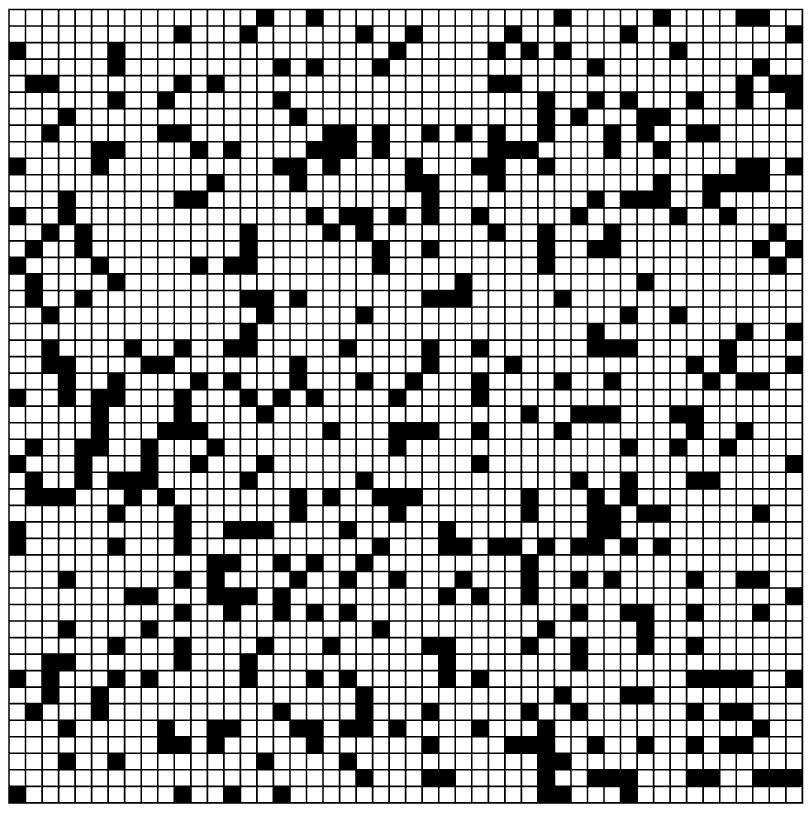}};
    \node[font =\small] at (11.5,5.9) {\emph{random-32-32-20}};
    \node[anchor=south west, inner sep=0](image) at (9.8,4.6) {\includegraphics[width = 0.055\linewidth]{figs/random-32-32-20.png}};
    \node[font =\small] at (15.7,5.9) {\emph{warehouse-10-20-10-2-2}};
    \node[anchor=south west, inner sep=0](image) at (14,5) {\includegraphics[width = 0.07\linewidth]{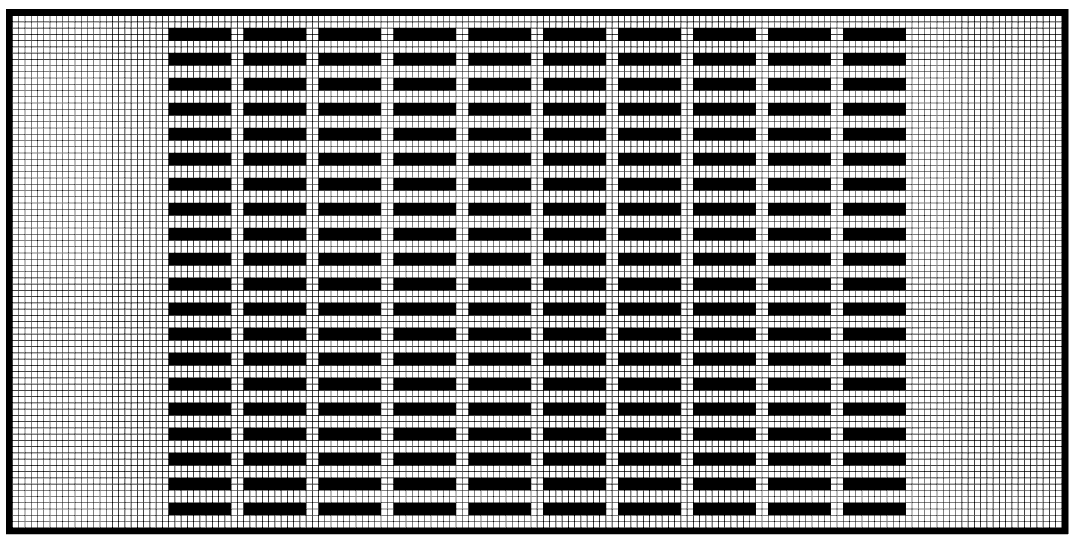}};
    \node[font =\small] at (9.25,0.75) {The number of agents};
    \end{tikzpicture}
    \caption{Empirical results for time and makespan/LB across four maps. Blue represents \Cref{alg:PULL}, and green represents the algorithm in \Cref{sec:concept}. Lines plot the average values, and the filled areas indicate the interquartile range.}
    \label{fig:exam_bas}
\end{figure*}
This section presents an empirical evaluation of \algname. 
Specifically, we conduct the following two: a runtime evaluation against the optimal \ilp algorithm, and an evaluation of the running time and solution quality (i.e., makespan) of \algname against a simple solution on a variety of benchmark maps.
The algorithm is coded in Python, and the experiments were run on a Mini PC with Intel Core i9-13900H 2.6GHz CPU and 32GB RAM.

\subsection{Comparison between \ilp and \algname}
As shown in \Cref{tab:ILP}, the ILP-based algorithm is effective for small instances, but its runtime becomes prohibitively large for larger ones. 
In contrast, \algname finds a solution within the time limit for all instances in ``empty-8-8,'' and on the instances solvable by \ilp, its average running time was almost 10 times faster in the worst. 
In \ilp, we add $O(|V|)$ constraints when a step is incremented, thus the runtime increases by a factor of $O(2^{|V|})$.
In contrast, \algname computes a succeeding configuration with an additional time overhead of $O(\Delta^2 n^2)$, resulting in rapid planning.
Moreover, the makespan \emph{suboptimality}, defined as $\text{makespan}/\text{optimum}$, is at worst on average 1.7, indicating that it outputs practically useful solutions.
For the detailed runtime analysis, see \Cref{sec:ILPapp}.

\subsection{\algname on Large-scale Problem}
We now demonstrate the effectiveness of \algname on large-scale instances.
For our evaluation, we use the \emph{random-32-32-20}, \emph{random-64-64-20}, and \emph{warehouse-10-20-10-2-2} in \mapf benchmarks~\cite{Stern19}, and randomly generated \emph{random-48-48-20}.
We prepare instances with a varying number of agents from 100 to 1,000 in increments of 100, by generating two random connected subgraphs ($S$, $T$).
The evaluation of \ilp is omitted here due to the lack of scalability.

As finding the optimum is hard for these instances, we benchmark our solution quality against a trivial lower bound from the value of bottleneck matching (see~\cite{Matching:GabowT88} for details), which is denoted by LB. 
This represents the plan length required to transform $S$ into $T$, disregarding any conflicts and connectivity.
Note that \algname is the first suboptimal algorithm that addresses \prbname -- no prior methods available. Thus, we used a simple solution in the \Cref{sec:concept} (denoted as \textsf{single}),
the one that calls \procname only once at each step,
as a baseline for solution quality.

\Cref{fig:exam_bas} displays the empirical results for the time and makespan/LB for the two algorithms \algname and \textsf{single}, based on 100 instances per setting.
\Cref{app:resulttable} shows the table of detailed results.
The main observation is as follows:
\begin{itemize}
    \item For each map, the average computational time of \algname increases with the number of agents; however, its growth is more gradual than $O(\Delta^2n^2)=O(n^2)$. It implies that the actual number of calls of \procname is $o(\Delta n)$ in some practical situations. This suggests the algorithm is scalable to even larger instances, enabling it to find solutions within a practical time.
    \item The average suboptimality increases for both algorithms as the density (i.e., $n/|V|$) increases; however, \algname exhibits a considerably more suppressed increase. Specifically, across all maps, when $n=500$, the makespan of \algname is 0.3 times or less than that of the simple solution. Moreover, in sparse conditions, \algname outputs solutions closer to the lower bound than \textsf{single}. We can conclude that \algname produces solutions that are comparatively close to the LB when compared to the simple solution, leading to more efficient operations.
\end{itemize}

\paragraph{Sensitivity to the map size.}
We conducted an additional experiment to demonstrate that the computational time is largely independent of the number of vertices $|V|$. 
Theoretically, the execution time depends solely on $n$ in the \mapf benchmark, while the map size is only involved in distance calculations during preprocessing. 
Therefore, the average time is expected to be mostly the same when $n$ is fixed.
\Cref{fig:exam_mapsize} presents the running time and makespan/LB for a fixed $n=100$,
on a set of random maps of increasing size (e.g., from 16x16 to 64x64, incrementing width by 8).
Note that all of the maps (except \emph{random-32-32-20} and \emph{random-64-64-20}) are randomly generated with 20\% obstacles.
We observe that $|V|$ has a small impact on \algname's running time. 
We can also see that makespan/LB decreases as density gets smaller, consistent with the previous observation.

\begin{figure}[ht]
    \centering
    \includegraphics[width=\linewidth]{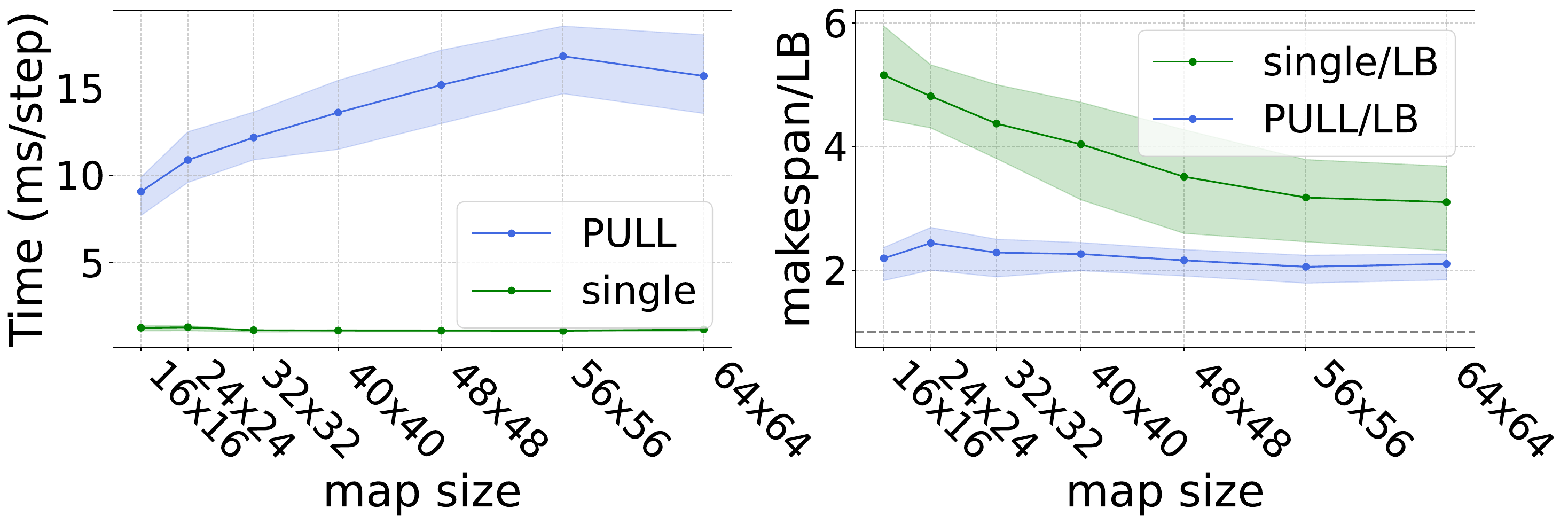}
    \caption{\Cref{alg:PULL} in several maps with $n=100$. }
    \label{fig:exam_mapsize}
\end{figure}

\section{Conclusion and discussion}
This paper tackled \prbname, a variant of \mapf that introduces a connectivity constraint on the entire agent configuration. This makes conventional \mapf algorithms inapplicable. Moreover, there are no algorithms readily applicable to \prbname in practical settings.
We first formulate \prbname (also Unlabeled \mapf) for \ilp, and obtain an optimal algorithm. Next, we introduced a rule-based algorithm \algname, that empirically finds solutions superior to vanilla approaches.
This work contributes to the literature on \mapf with additional constraints, a promising area for future research. 
The discussion is as follows.
\paragraph{Algorithm improvement?}
Several directions exist to improve \algname. 
For example, one might consider implementations using recursive functions, similar to PIBT~\cite{PIBT22}, or reduction to flow-like problems.
Our early attempts observed that neither of these two methods can efficiently handle connectivity, making a significant algorithmic speedup challenging. 
Nevertheless, the latter approach---specifically the escape problem~\cite{Cormen22}---might be applicable to refinement algorithms.

\paragraph{Anytime algorithm.}
As noted in \Cref{tab:MAPFwork}, we also developed an eventually optimal algorithm for \prbname that integrates \algname with a search-based \mapf algorithm \lacam~\cite{LaCAM23}.
We use \algname as a configuration generator and perform an anytime search over configurations, which is guaranteed to converge to an optimal solution eventually.
While this approach finds an optimal solution for small instances, the refinement effect for larger instances is almost nothing in practice.
Developing a sophisticated anytime scheme for \prbname remains an open question.

\bibliography{ref-macro, aaai2026}
\newpage
\appendix
\section{Omitted proofs of \Cref{sec:proof}}\label{sec:omittedproofs}

\subsection{Proof of \Cref{lem: connectivity}}
\connectivity*
\begin{proof}
We prove the lemma by induction on $k$, the number of calls to the \procname. 
For the base case, we have $\currentk{0} = \qfrom$, which is connected and collision-free by the assumption.

Assume that $\currentk{k}$ is connected and collision-free. When $V_S = \emptyset$ at line \ref{pull:judge}, then $\currentk{k}=\currentk{k+1}$ since $\reservedk{k} = \reservedk{k+1}$. Then $\currentk{k+1}$ is also connected.
Consider the case when $(\reachablek{k}\setminus \cutsetk{k})\setminus (V'\cup \{t\}) \ne \emptyset$.
We can see that the set of occupied vertices after the move is $\currentk{k+1} = (\currentk{k} \setminus \{\textsf{cur}\}) \cup \{t\}$. 
The procedure \procname selects $\textsf{cur}$ from a set of vertices $V_S$ that excludes any cut vertices of the graph $G[\currentk{k} \cup \{t\}]$. Since removing a non-cut vertex from a connected graph preserves its connectivity, the graph $\indG{\currentk{k+1}}$ is also connected.
Thus, we can prove that $\indG{\current}$ is connected by induction.

Next, we show that there is no vertex conflict and swap conflict in the $(k+1)$-th call.
First, assume that there is a vertex conflict, i.e., there is a vertex $v$ in $(\textsf{cur},t)$-path such that $v\in\current(\reservedk{k})$.
If $v=t$, $t$ was already occupied by another agent, since $t\notin \setqfrom$. However, this contradicts the statement in line \ref{pull: goal_call}, which excludes the occupied vertices.
If $v\ne t$, it holds that $v\in \setcurrent(\reservedk{k})$ and $v\in \reachablek{k+1}\subseteq \setcurrent\setminus\setcurrent(\reserved)$, by the construction of $\reachablek{k}$.
This fact immidiately contradicts that $\setcurrent(\reservedk{k})\cap (\setcurrent\setminus\setcurrent(\reserved))=\emptyset$. Therefore, no two agents can occupy the same vertex in $\currentk{k+1}$.
Second, assume that there is a swap conflict, that is, there is a pair of agents $i,j\in R$ such that $\currentk{k+1}[i]=\qfrom[j] \wedge \currentk{k+1}[j]=\qfrom[i]$. 
Note that there is no collision occurs in $\currentk{k}$.
Since both agents are moved by separate \procname calls within the same one-step function, one must be moved before the other. Assume without loss of generality that agent $i$ is moved before agent $j$.
Note that there is no collision in $\currentk{k}$, we can assume that the move of agent $j$ occurs in $(k+1)$-th call, by induction hypothesis.
Now consider the configuration $\currentk{k}$, which is the configuration that right before agent $j$ is moved. In $\currentk{k}$ agent $j$ has not yet moved, thus $\currentk{k}[j]=\qfrom_k[j]$ holds. Furthermore, agent $i$ has already determined $\current[i]$. By our initial assumption of a swap conflict, we know that $\currentk{k+1}[i]=\qfrom[j]$ and $\currentk{k+1}[i]=\currentk{k}[i]$, we have $\currentk{k}[i]=\qfrom[j]$.
Then we have $\currentk{k}[j]=\currentk{k}[i]$ at the $k$-th call, which immediately contradicts to the fact that there is no vertex conflict in $\currentk{k}$.

This claim holds for all $k\in \mathbb{N}$, then $\currentk{\tau}$ is connected and collision-free.

To the last, it is clear that $\current[i]\in \nei{\setqfrom}\cup \setqfrom$ for all the agents, thus $\current_{\tau}$ is a reachable configuration from $\qfrom$. This completes the proof.
\end{proof}
\subsection{Proof of \Cref{lem: firstcall}}
\firstcall*
\begin{proof}
    To prove the lemma, we must show that this set $V_S$ is always non-empty when $\reserved =\emptyset$. This is equivalent to showing that there is at least one vertex $v\notin \cutsetk{1}$ in the set of $(\setqfrom \cup \{\finalk{1}\}) \setminus (V'\cup \{\finalk{1}\}) = \setqfrom \setminus V'$.

First, we show the case when $V'=\emptyset$. Any connected graph with two or more vertices has at least two vertices that are not cut vertices. Thus $|\reachable\setminus \cutset|\ge 2$ holds for the graph $G[\setqfrom \cup \{\finalk{1}\}]$. Since $\indG{\setqfrom\cup \{\finalk{1}\}}$ is connected, $\finalk{1}$ cannot be the cut vertex of $G[\setqfrom \cup \{\finalk{1}\}]$. Thus $|V_S|\ge 2-1= 1$ holds when $V'=\emptyset$.

We move to the case where $V'\ne \emptyset$ and $\indG{V'}$ is connected.
Note that $\finalk{1}\in \nei{V'}$ when \algname calls the procedure, thus $\indG{V'\cup \{\finalk{1}\}}$ is connected.
We assume that every vertex in $\setqfrom\setminus V'$ is a cut vertex of $\indG{\setqfrom\cup\{\finalk{1}\}}$ for a contradiction. 
Let $w$ be a vertex that holds $w =\argmax (\min_{y\in T\cup \{\finalk{1}\}}\dist(u,y))$.
From the assumption, $w$ is a cut vertex of $\indG{\setqfrom\cup \{\finalk{1}\}}$.
Thus, $G' = \indG{\setqfrom\cup\{\finalk{1}\}\setminus\{w\}}$ consists of at least two components, and all vertices in $V'\cup \{\finalk{1}\}$ are contained in the same component.

Let $C_1$ be a vertex set that $V'\cup \{\finalk{1}\}\subseteq C_1$ and $\indG{C_1}$ is a component of $G'$, and $u$ be a vertex $u'\in V\setminus C_1$ (\Cref{fig:lem2} left).
Since $w$ is a cut vertex and $v'\in V'$ and $u'$ is in different component in $G'$ for every vertex $v'\in V'$, every $(v',u')$-path must pass through $w$. The same applies to shortest paths.
Thus, $u'$ holds that $ \min_{y\in T\cup \{\finalk{1}\}}(\dist(u',y))\ge \min_{y\in T\cup \{\finalk{1}\}}(\dist(w,y))+1$ (\Cref{fig:lem2} right).
This leads a contradiction to the assumption that $w =\argmax (\min_{y\in T\cup \{\finalk{1}\}}\dist(u,y))$.
Therefore there is at least one vertex in $(\reachablek{1}\setminus \cutsetk{1})\setminus (V'\cup \{\finalk{1}\})$.

Then the call \procname$(\finalk{1},\emptyset,V')$ invokes line \ref{pull:assign}. Thus, there is an agent $i$ such that $ \current[i]=\finalk{1}$, completing the proof.
\end{proof}
\subsection{Proof of \Cref{lem:goforward}}
\goforward*
\begin{proof}
    From the assumption, note that lines \ref{pull:goal_s}--\ref{pull:goal_e} are not executed. 
    Consider the vertex $u_0$, the first vertex in the vertex set sorted on line \ref{pull:notg_s}.
    We can see that $\min_{t\in T}\dist(u_0,t)=\min_{(s,t)\in \setqfrom\times T}\dist(s,t)-1$. Note that $\reserved$ is initialized with $\emptyset$ in line \ref{pull:init}. We can apply \Cref{lem: firstcall}, and we can see that there is an agent $a_i$ such that $\current[i]=u_0$.
\end{proof}
\subsection{Proof of \Cref{lem: convergence}}
\convergence*
\begin{proof}
    Let $\mathcal{P}$ be a list of components in $G[\setconf{\tau}\cap T]$, sorted according to their size. Note that $\qfrom = \conf{\tau}$ and $\current=\conf{\tau+1}$.
    First, we show $\mathcal{P}[0]\subseteq \currentk{k}$ after the $k$-th call of procedure \procname for every $k$ call by induction. Note that $\mathcal{P}[0]\subseteq \currentk{0}$ since $\currentk{0}=\qfrom$.
    Assume that $\mathcal{P}[0]\subseteq \currentk{k'}$ holds for some $k'$. First we consider the case where $\finalk{k'+1}\in N(\mathcal{P}[0])$. Then it clearly holds that $\currentk{k+1} = \currentk{k}\cup \{\finalk{k'+1}\}\setminus \{\textsf{cur}\}$ for some vertex $\textsf{cur}\notin \mathcal{P}[0]$, since we restrict that the initial vertex $\textsf{cur}$ is not in $\mathcal{P}[0]$ in line \ref{pull:judge}. Thus $\mathcal{P}[0]$ suffices that $\mathcal{P}[0]\subseteq \currentk{k'+1}$. 
    Second, consider the case where $\finalk{k'+1}\notin N(\mathcal{P}[0])$. All agents $i\in A$ which holds $\qfrom[i]\in \mathcal{P}[0]$ suffices $i\in \reservedk{k'+1}$ in line 11, then every vertex in $\mathcal{P}[0]$ never be chosen as $\textsf{cur}_{k'+1}$. Thus, it is clear that $\mathcal{P}[0]\subseteq \currentk{k'+1}$ by the same discussion as Lemma \ref{lem: connectivity}.
    
    There is a call of function \procname$(u_0,\emptyset, \mathcal{P}[0])$ for a vertex $u\in \nei{\mathcal{P}[0]}$. This call always succeeds, by lemma \ref{lem: firstcall}. Then it is clear that $p^{\max}_{\tau + 1}\ge |\mathcal{P}[0]|+1=p^{\max}_{\tau}+1$, which completes the proof.
\end{proof}
\subsection{Proof of \Cref{prop: runtime}}
\runtime*
\begin{proof}
    It takes $O(E(G[\qfrom]))=O(\Delta n)$ time for BFS, DFS in line \ref{pull:BFS}, \ref{pull:DFS}, and line \ref{pull:assign} takes $O(n)$ time. Thus, it takes $O(\Delta n)$ time for each call of procedure \procname. 

Line \ref{pull:sortcc} takes $O(\Delta n)$ time, and the number of calls is upper-bounded by $\Delta n$. Also, sorting the agent by distance takes $O(n)$ time by bucket sort by $O(n)$ bukcets, since
\begin{align*}
    \min_{(s',t)\in(\setqfrom\times T)}\dist(s',t)\le \min_{t\in T} \dist(s,t) \\\le \min_{(s',t)\in(\setqfrom\times T)}\dist(s',t) + n-1
\end{align*} for all $s \in \setqfrom$. Thus, it holds that
\begin{align*}\min_{(s',t)\in(\setqfrom\times T)}\dist(s',t)-1 \le \min_{t\in T} \dist(v,t)\\\le \min_{(s',t)\in(\setqfrom\times T)}\dist(s',t)+n 
\end{align*} for every $v\in \nei{\setqfrom}$. 
Then the total running time is $O(\Delta^2 n^2)$. 
\end{proof}

\subsection{Tightness, Adversarial instances}
\begin{prop}\label{prop: tight}
    There is an instance where the optimal makespan is $\textsf{diam}(G) + |A|-1$, even when the given graph is planar.
\end{prop}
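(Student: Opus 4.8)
The plan is to exhibit the \emph{fan--corridor--fan} instance depicted in \Cref{fig: instance}(a) and show that it is tight. Concretely, let $G$ consist of a top path $s_1 s_2 \cdots s_k$ each of whose vertices is joined to $d_1$, a corridor path $d_1 d_2 \cdots d_\ell$, and a bottom path $t_1 t_2 \cdots t_k$ each of whose vertices is joined to $d_\ell$; fix parameters $k \ge 3$ and $\ell \ge 2$, and set $S = \{s_1,\dots,s_k\}$, $T = \{t_1,\dots,t_k\}$, and $n = |A| = k$. This graph is planar: each fan $s_1\cdots s_k + d_1$ and $t_1\cdots t_k + d_\ell$ is a planar fan graph, and attaching the two fans to the ends of a single corridor introduces no crossings. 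A short distance computation gives $\diam(G) = \ell + 1$: the distance $\dist(s_i,t_j) = \ell+1$ is realized through the corridor, while every other pair of vertices is at distance at most $\ell$ (within a fan at most $2$, fan-to-corridor at most $\ell$, corridor-internal at most $\ell-1$). Hence $\diam(G) + n - 1 = \ell + k$, and by \Cref{prop: upperbound} there is a valid connected plan of makespan at most $\ell + k$. It therefore remains only to establish the matching lower bound.

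For the lower bound I would exploit that $d_1$ is a cut vertex separating $S$ from the rest of $G$: deleting $d_1$ leaves $\indG{S}$ in a different component from $\indG{\{d_2,\dots,d_\ell\}\cup T}$, since the only external neighbor of each $s_i$ is $d_1$. Consequently every agent, whose trajectory is a walk from a vertex of $S$ to a vertex of $T$, must occupy $d_1$ at some step. I would let $\tau_i$ be the first time at which agent $i$ is located at $d_1$. Because the vertex-conflict constraint forbids two agents from occupying $d_1$ simultaneously, the $k$ values $\tau_1,\dots,\tau_k$ are pairwise distinct positive integers, so their maximum is at least $k$; let $i^\star$ attain it, giving $\tau_{i^\star} \ge k$. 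Since $d_1 \notin T$ and $\dist(d_1,t_j) = \ell$ for every $j$, agent $i^\star$ needs at least $\ell$ further steps after $\tau_{i^\star}$ to reach $T$, so it arrives no earlier than step $\tau_{i^\star} + \ell \ge k + \ell$. As all agents must be on $T$ at the final step, the makespan is at least $k + \ell = \diam(G) + n - 1$, matching the upper bound and proving that the optimal makespan equals $\diam(G) + |A| - 1$.

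The main obstacle is making the lower-bound bookkeeping airtight, namely justifying (i) that every agent genuinely passes through $d_1$, which follows cleanly from $d_1$ being a separating vertex and from the reachability condition forcing each trajectory to be a walk; (ii) that the first-visit times $\tau_i$ are distinct, which is a direct consequence of single-occupancy of $d_1$; and (iii) that the agent realizing the latest first-visit still owes a full $\dist(d_1,T)=\ell$ steps. It is worth emphasizing that this argument uses only vertex-conflict avoidance and reachability, not connectivity, which is precisely why it combines with the \emph{connected} upper bound of \Cref{prop: upperbound}. The remaining checks are routine: verifying planarity of the fan--corridor--fan construction and confirming $\diam(G) = \ell + 1$, both of which follow by inspection once $k$ and $\ell$ are fixed.
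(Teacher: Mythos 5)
Your proof is correct and uses the same fan--corridor--fan instance as the paper (\Cref{fig: instance}(a)); the difference is in how much is actually proved. The paper's own proof is two sentences long: it asserts that the optimal makespan of this instance is $k+\ell$ and observes that \algname itself returns a plan of makespan $\ell+k$ (exactly one successful assignment per step), leaving the matching lower bound implicit. You instead obtain the upper bound abstractly from \Cref{prop: upperbound} and then supply the missing lower-bound argument explicitly: every agent's trajectory must cross the cut vertex $d_1$, single occupancy forces the $k$ first-visit times at $d_1$ to be pairwise distinct (so some agent first reaches $d_1$ no earlier than step $k$), and that agent still owes $\dist(d_1,T)=\ell$ further steps, giving makespan at least $k+\ell$. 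This bottleneck argument is precisely the piece the paper omits, and it is what certifies that $\ell+k$ is optimal rather than merely achievable; your remark that the lower bound uses only reachability and vertex-conflict avoidance (hence applies a fortiori to connected plans) is also the right way to make it airtight. Both routes agree on the construction; yours buys a complete, self-contained optimality proof, while the paper's buys brevity plus the additional observation that \algname attains the optimum on this instance.
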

\begin{proof}
    consider the graph, like the one illustrated in \Cref{fig: instance}(a). The optimal makespan is $k+\ell$, where $\textsf{diam}(G) + |A|-1 = (\ell +1) + k -1=\ell+k$. 
    We can observe that \algname returns a plan with makespan $\ell+k$, since the procedure \procname succeeds in determining $\current$ exactly once at each timestep.
\end{proof}
\begin{prop}\label{prop: adversarial}
    There is an instance where optimal makespan is 2, and \algname returns the plan with makespan $O(|A|)$.
\end{prop}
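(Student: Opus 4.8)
The plan is to use the $\prbname$ instance of \Cref{fig: instance}(b): let $G$ be the $k\times 3$ grid with vertices $v_{i,j}$ for $i\in\{1,\dots,k\}$ and $j\in\{1,2,3\}$, and set $S=\{v_{i,1}\mid i\in[k]\}$ (the left column) and $T=\{v_{i,3}\mid i\in[k]\}$ (the right column), so that $n=|A|=k$. First I would pin down the optimum. The plan $[\,S,\ \{v_{i,2}\}_{i},\ T\,]$ that shifts every agent one column to the right at each of its two steps is valid: each intermediate column induces a connected path, every move uses a grid edge, and since each step is a bijection on vertices there is neither a vertex nor a swap conflict. For the matching lower bound, $\dist(v_{i,1},v_{j,3})\ge 2$ for all $i,j$, so no agent can stand on a target after a single step; hence the optimal makespan is exactly two.

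The makespan of \algname is controlled by a potential argument. I would track $\Phi(\mathcal{Q})=\sum_{i\in A}\min_{t\in T}\dist(\mathcal{Q}[i],t)$, which on this grid equals $\sum_{i}\bigl(3-\mathrm{col}(\mathcal{Q}[i])\bigr)$, so that $\Phi(S)=2k$ and $\Phi(T)=0$. The goal is to show that one invocation of \Cref{alg:PULL} lowers $\Phi$ by at most two. Since any valid plan must drive $\Phi$ from $2k$ down to $0$, this forces at least $k$ steps, so \algname returns a plan of makespan $\Omega(k)=\Omega(|A|)$. Combined with the generic bound $\diam(G)+n-1=(k+1)+(k-1)=2k$ from \Cref{prop: upperbound}, the makespan is $\Theta(|A|)$, in particular $O(|A|)$ and unboundedly larger than two.

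The crux, and the step I expect to be the main obstacle, is justifying the per-step bound, which I would establish through a structural invariant proved by induction on the step count: every configuration $\setqfrom$ produced by \algname induces a single \emph{monotone snake}, i.e. an induced path in $G$ along which the column index is non-increasing and whose far endpoint lies in column $1$. Because such an $\indG{\setqfrom}$ has vertex connectivity one, its only non-cut vertices are the two path endpoints; hence in the first \procname call of a step that succeeds, the set $V_S$ contains only the column-$1$ endpoint, so $\textsf{cur}$ is forced there and the chain of $\textsf{next}(\cdot,t)$ assignments sweeps the entire path, placing \emph{every} agent into $\reserved$. All later \procname calls of the same step then find $\reachable=\{t\}$, so $V_S=\emptyset$, and return immediately; thus exactly one chain fires per step. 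Finally, a single monotone chain crosses from column $1$ to column $3$ at most once, advancing at most one agent from column $1$ into column $2$ and one from column $2$ into column $3$, while every remaining agent merely slides within its own column and keeps its distance to $T$. Hence $\Phi$ drops by at most two, as required.

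The delicate points I anticipate are confined to sustaining this invariant. I must verify that the snake shape survives the goal-handling branch (lines \ref{pull:goalornot}--\ref{pull:goal_e}), where the fired chain may pass through an already occupied target vertex, and that the extension vertex $t$ (the next target just below or above the current column-$3$ head) is always adjacent to exactly one snake vertex, so that adding it keeps $\indG{\setqfrom\cup\{t\}}$ a path rather than creating a cycle. Confirming that the orderings on line \ref{pull:sortcc} and line \ref{pull:notg_s} never select an extension that breaks column-monotonicity closes the induction; the potential bound and the claimed $\Theta(|A|)$ makespan against the optimum of two then follow.
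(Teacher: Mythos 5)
Your choice of instance and the two-step optimality argument are correct, and the $O(|A|)$ upper bound on \algname's output does follow from \Cref{prop: upperbound} as you say. The genuine gap is the structural invariant that your entire per-step bound rests on: it is \emph{false} that every configuration produced by \algname on this instance is a single monotone snake with exactly one chain firing per step. The flaw is tie-breaking. On this instance the sort key of line~\ref{pull:notg_s} is constant (every column-2 vertex is at distance one from $T$), and the $\argmax$ of line~\ref{pull:select} ties between the two column-1 endpoints, so the execution of \algname is not unique. Writing $v_{i,j}$ for the vertex in row $i$, column $j$, suppose the first call of step one is \procname$(v_{3,2},\emptyset,\emptyset)$, i.e.\ $t$ attaches at a middle row: then $V_S=\{v_{1,1},v_{k,1}\}$, the chain started at $v_{k,1}$ moves only the agents of rows $3,\dots,k$, and the resulting configuration is the column-1 path with a pendant vertex $v_{3,2}$ --- a tree, not a path. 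The agents in rows 1 and 2 are not in $\reserved$, and the subsequent call \procname$(v_{2,2},\reserved,\emptyset)$ also succeeds, because $v_{1,1}$ is a non-cut leaf reachable from $v_{2,2}$ through the free vertex $v_{2,1}$. So two chains fire within one step, contradicting your claim that the first successful chain sweeps every agent into $\reserved$.

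Moreover, this is not just a hole in the justification: your conclusion ``at least $k$ steps'' is itself false for some executions. Continuing the above tie-breaking produces a two-headed configuration from which later steps can fire two chains whose targets $t$ lie in column 3, each changing your potential by $\dist(t,T)-\dist(\textsf{cur},T)=-2$, i.e.\ $\Phi$ drops by four per step; such executions finish in roughly $|A|/2$ steps. This is precisely why the paper claims only that \algname's makespan is at least $|A|/2+O(1)$ on the instance of \Cref{fig: instance}(b), rather than $|A|+O(1)$. The proposition itself survives --- every execution still needs $\Omega(|A|)$ steps against an optimum of two --- but to prove it you must either (i) bound the number of successful \procname calls per step by a constant over the tree-like configurations that actually arise (not only over paths), and combine this with the per-chain bound of two, or (ii) fix a concrete tie-breaking rule and restrict the statement to that execution. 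The step you deferred at the end of your proposal (that lines~\ref{pull:sortcc} and~\ref{pull:notg_s} never select a snake-breaking extension) is exactly the point that cannot be confirmed, because the sorts do not determine a unique choice.
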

\begin{proof}
    consider the graph, like the one illustrated in \Cref{fig: instance}(b). Even if the optimal makespan is 2, \algname produces a solution with a makespan of at best $|A|/2+O(1)$.
\end{proof}

\section{Combining \algname and \lacam}\label{sec:lacam}
As \Cref{fig: instance}(b) illustrates, this algorithm performs poorly when  \emph{vertex connectivity} (the minimum number of vertices whose removal disconnects the graph) of $G[S]$ is small.
It is challenging to address this by improving the method used in \algname.
This is because \algname employs a method that gradually modifies the configuration within a single step while preserving connectivity; however, as seen in the instance in \Cref{fig: instance}(b), proceeding in that manner allows at most two agents to move forward for any $k$.
Therefore, we propose a search-based method that utilizes \lacam ~\cite{LaCAM23} towards an eventually optimal algorithm. 

\lacam is an eventually optimal \mapf solver that utilizes DFS over configuration space with some existing \mapf algorithms as a configuration generator. 
\lacam starts a DFS operation with a configuration $\conf{0}$, i.e. generating a reachable successor $\conf{1}$ from $\conf{0}$, generating $\conf{2}$ from $\conf{1}$, to the target configuraiton.
Unlike the normal DFS operation, \lacam allows to revisit a configuration: if $\conf{k}$ is revisited from $\conf{k'}$, \lacam introduces a constraint $C$ for a subset of agents during successor generation, such that $\conf{k}$ must produce a successor that is distinct from $\conf{k+1}$. 
Constraints $C$ are applied to configurations by specifying both which agents are constrained and their respective movements. 
Eventually, by enumerating all such constraints, the algorithm explores all possible neighboring configurations.
Thus, with search tree rewiring, the traversal explores the entire configuration space, implying completeness and optimality.

\lacam was originally developed for labeled \mapf, using PIBT~\cite{PIBT22} as a configuration generator, but it is easily attainable \prbname counterpart with \algname.
Configuration generator for \lacam needs to output the next configuration $\current$, taking the current configuration $\qfrom$, target $T$, and constraints $C$ as input. 
Here, the constraints $C$ include agents with predetermined destinations for the next step, along with their respective destinations.
Since \algname cannot handle the constraints, we modified it slightly to accommodate them (see \Cref{alg:PULL-LaCAM}). 
Note that these constraints may invalidate the lemmas in \Cref{sec:proof}, potentially preventing the attainment of a conflict-free configuration.
The adaptation for completeness is described as follows.

\begin{algorithm}[t]
		\caption{\algname for \lacam}
		\label{alg3}
		\begin{algorithmic}[1]  
			\Require $\qfrom$, $A$, $T$, and constraints $C$ \Comment{$C$ is given by set of tuples $(i\in A, v\in \neicl{\qfrom[i]})$}
			\Ensure configuration $\mathcal{Q}^{to}$, or $\bot $ \Comment{initialized by $\bot^n$}
            \State $R\leftarrow \emptyset$, $\textsf{pri}[v]\leftarrow \min_{t\in T}\dist(v,t)$ for $v\in \qfrom$
            \For{$(i,v)\in C$}
            \State $\current[i]=v$, $R\leftarrow R\cup \{i\}$
            \EndFor
            \If{$\exists i,j\in A$ s.t. $i\notin R\wedge j\in R\wedge \qfrom[i]=\current[j]$}
            \State $\textsf{pri}[\qfrom[i]]\leftarrow |V|$\label{pullcam:setpri}
            \EndIf
            \State perform \algname, with modified \procname
            \If{$\current$ is not connected} \Return $\bot$\label{pullcam:conn}
            \EndIf
            \If{there is a vertex conflict} \Return $\bot$\label{pullcam:vconf}
            \EndIf
            \If{there is a swap conflict between $i,j\in A$}
            \State $\current[i],\current[j]\leftarrow \current[j],\current[i]$\label{pullcam:swap}
            \EndIf
            \State\Return $\current$
            \Function{\procname}{$t,R, V'$} % 
            \State perform lines \ref{pull:BFS} and \ref{pull:DFS} of \Cref{alg:PULL}
            \If{$V_S\leftarrow (\reachable\setminus \cutset) \setminus (V'\cup \{t\})= \emptyset$} \Return $R$ \label{pullcam:judge}
            \EndIf
            \State $\textsf{cur} \leftarrow \argmax_{v \in V_S} ( \textsf{pri}[v])$\label{pullcam:select}
            \State $i\leftarrow$ agents s.t. $\qfrom[i]=\textsf{cur}$
            \State $Q^{to}[i]\leftarrow \textsf{next}(\textsf{cur},t)$\Comment{determining $\current$}
            \State $R\leftarrow R\cup \{i\}$, $\textsf{cur} \leftarrow Q^{to}[i]$\label{pullcam:assign}
            \State \Return $R$
            \EndFunction
		\end{algorithmic}\label{alg:PULL-LaCAM}
	\end{algorithm}

A key change is in line \ref{pullcam:setpri}, which uses the list $\textsf{pri}$ instead of the distance to $T$.
If there is a pair of agents $i,j$ that $i\notin \reserved\land j\in \reserved\land\qfrom[i]=\current[j]$, we should move agent $i$ to avoid the vertex conflict.
Then the \Cref{alg:PULL-LaCAM} sets the priority of vertex $\qfrom[i]$ high, so that vertex $\qfrom[i]$ tends to be chosen as the initial vertex of the path. 
If $\qfrom[i]$ is chosen, agent $i$ moves to another vertex, which avoids the vertex-conflict between $i$ and $j$.
If there is are any conflicts in the resulting configuration, \Cref{alg:PULL-LaCAM} returns $\bot$.

If $\current$ is not a valid configuration: i.e. $\current$ is not connected, or there is any conflict, \Cref{alg:PULL-LaCAM} returns $\bot$, meaning it cannot generate the configuration in lines \ref{pullcam:conn} and \ref{pullcam:vconf}.
Note that the search process ensures all generated configurations are valid. 
Specifically: (1) if $\current$ is not connected, it is rejected in line \ref{pullcam:conn}; (2) any configuration with a vertex conflict is pruned by the search in line \ref{pullcam:vconf}; and (3) any swap conflicts are resolved in line \ref{pullcam:swap}. 
Consequently, every resulting configuration is guaranteed to be connected, collision-free, and reachable.
This leads to the following, which is a direct consequence of \cite{LaCAM23}.
\begin{prop}
    The algorithm, which uses \Cref{alg:PULL-LaCAM} as a subroutine of \lacam, is complete and eventually optimal; i.e., the output eventually converges to optima.
\end{prop}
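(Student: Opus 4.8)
The plan is to reduce the statement to the completeness and eventual-optimality theorem of \lacam~\cite{LaCAM23} by verifying that \Cref{alg:PULL-LaCAM} meets exactly the two interface conditions that \lacam imposes on any configuration generator. Recall that \lacam runs a DFS over configuration space and, for each expanded node, lazily enumerates constraints $C$; its guarantees are proved in \cite{LaCAM23} under the assumptions that (i) the generator, given $C$, returns a \emph{valid} successor \emph{consistent} with $C$ (every agent constrained in $C$ moves exactly as specified), and (ii) by exhausting the constraint tree, every valid successor of the expanded configuration is eventually produced. I would therefore organize the proof around establishing (i) and (ii) for \Cref{alg:PULL-LaCAM} and then invoke the \lacam theorem, together with search-tree rewiring over makespan, to obtain completeness and eventual optimality.

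First, for condition~(i), I would argue that every non-$\bot$ output is a valid \prbname transition consistent with $C$. Consistency is immediate from the initial loop, which fixes $\current[i] = v$ and reserves $i$ for each $(i,v) \in C$. Validity follows from the explicit guards: connectivity is checked in line~\ref{pullcam:conn} (returning $\bot$ otherwise), vertex conflicts in line~\ref{pullcam:vconf}, and swap conflicts are removed in line~\ref{pullcam:swap}. Here I would stress the \emph{unlabeled} nature of \prbname: a swap along an edge between agents $i,j$ leaves the occupied vertex set unchanged, so swapping their destinations in line~\ref{pullcam:swap} is equivalent to both staying, which preserves connectivity and vertex-distinctness while eliminating the conflict. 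Combined with reachability (each assigned vertex is $\textsf{next}(\cdot,t)$ of an adjacent vertex, exactly as in the analysis of \Cref{alg:PULL}), this shows every returned $\current$ is connected, collision-free, and reachable.

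The hard part will be condition~(ii): showing that the constraint enumeration reaches every valid successor despite \algname's heuristic, rule-based filling of the unconstrained agents, whose behaviour under arbitrary $C$ is difficult to characterize directly. The clean way around this is to note that \lacam's constraint tree eventually produces, for any target valid successor $\mathcal{Q}'$, the fully specified set $C = \{(i,\mathcal{Q}'[i]) \mid i \in A\}$, and to verify that on such an input \Cref{alg:PULL-LaCAM} returns exactly $\mathcal{Q}'$. Indeed, after the initial loop all agents are reserved ($R = A$), so every subsequent call to the modified \procname finds $V_S = \emptyset$ in line~\ref{pullcam:judge} and assigns nothing, leaving $\current = \mathcal{Q}'$; moreover, since $\mathcal{Q}'$ is a valid successor it contains no swap conflict, so line~\ref{pullcam:swap} does not fire, and all guards in lines~\ref{pullcam:conn}--\ref{pullcam:vconf} pass, so $\bot$ is not returned. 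Hence every valid successor is generated at the latest when its fully specified constraint is enumerated, which is precisely what (ii) demands; the heuristic behaviour on partially specified constraints can only accelerate the search without compromising coverage.

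Finally, I would assemble the conclusion. Conditions~(i) and~(ii) place \Cref{alg:PULL-LaCAM} within the class of generators for which \cite{LaCAM23} establishes completeness and eventual optimality, so the search performs an exhaustive, cost-aware traversal of the graph whose nodes are valid (connected, collision-free) configurations and whose edges are valid transitions. Because \Cref{thm: complete} guarantees that a valid plan from $S$ to $T$ always exists, this configuration-space graph contains an $S$-to-$T$ walk; the search therefore terminates with a solution (completeness), and the rewiring of \lacam over makespan ensures the reported solution converges to one of minimum makespan (eventual optimality). The only genuinely delicate interaction---the swap-resolution step combined with the constraints in $C$---is sidestepped by the coverage argument, since in the fully specified case used for~(ii) the target successor is valid by construction and hence swap-free.
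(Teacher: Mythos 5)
Your proposal is correct, and at the top level it takes the same route as the paper: check that \Cref{alg:PULL-LaCAM} only ever emits valid (connected, collision-free, reachable) configurations, then inherit completeness and eventual optimality from the \lacam framework of \cite{LaCAM23}. The difference is in how much of the reduction is actually verified. The paper's own argument consists of the validity check --- the guards at lines \ref{pullcam:conn} and \ref{pullcam:vconf} and the swap resolution at line \ref{pullcam:swap} --- followed by the assertion that the proposition is ``a direct consequence'' of \cite{LaCAM23}. You additionally verify the second interface condition, which is the mathematically essential one: coverage of all valid successors by the constraint enumeration. Your key observation --- that for a fully specified constraint set $C=\{(i,\mathcal{Q}'[i]) \mid i \in A\}$ the initial loop reserves every agent, so every call of the modified \procname finds $V_S=\emptyset$ at line \ref{pullcam:judge}, nothing is reassigned, no guard fires (since $\mathcal{Q}'$ is a valid successor, hence connected, conflict-free, and swap-free), and the algorithm returns exactly $\mathcal{Q}'$ --- is precisely what guarantees that the heuristic behaviour of \algname on partial constraints cannot hide any successor from the search; the paper leaves this entirely implicit. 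Likewise, your handling of line \ref{pullcam:swap} (in the unlabeled setting, exchanging destinations of a swapping pair is equivalent to both agents staying, hence validity-preserving; and any inconsistency it could create with a partial $C$ is immaterial because coverage is achieved through full constraints, where no swap occurs) addresses a delicate point the paper does not discuss at all. In short, same reduction, but your decomposition into the two generator conditions --- validity/consistency and full-constraint coverage --- supplies the checks that make the paper's one-line deferral to \cite{LaCAM23} rigorous.
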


\section{Experiments}
\subsection{detailed time profiles in \Cref{tab:ILP}}\label{sec:ILPapp}
In \Cref{tab:ILPdetail}, we present a detailed analysis of time and makespan. The 95\% confidence intervals are shown as $\pm$ next to the mean values.

\begin{table*}[t]
    \centering
\caption{Detailed results of \Cref{tab:ILP}. The columns ``time'' and ``ratio'' report the arithmetic mean together with the 95\% CI. In the last row, we report only the mean for the \ilp algorithm, since the number of solved instances is too small for the summary statistics to be meaningful.}\label{tab:ILPdetail}
\setlength{\tabcolsep}{8pt}
\renewcommand{\arraystretch}{1}
\begin{adjustbox}{max width=\textwidth}

\begin{tabular}{crrrrrr}
\toprule
&&&\multicolumn{2}{c}{time(\SI{}{\second})}&\multicolumn{2}{c}{ratio}\\
\cmidrule{4-7}
  Map & $n$ & solved (\%) & \ilp & \algname & speedup & makespan\\
\midrule
  \makecell[c]{{\scriptsize\emph{empty-8-8}}\\ \adjustbox{raise=-5mm}{\includegraphics[width = 0.08\linewidth]{figs/empty-8-8.png}}} 
& \makecell[r]{10\\15\\20\\25\\30}&\makecell[r]{100\\100\\100\\100\\100}&\makecell[r]{\mpar{0.137}{0.025}\\\mpar{0.226}{0.057}\\\mpar{0.233}{0.040}\\\mpar{0.202}{0.036}\\\mpar{0.190}{0.026}}  &\makecell[r]{\mpar{0.005}{0.001}\\\mpar{0.010}{0.002}\\\mpar{0.013}{0.002}\\\mpar{0.015}{0.002}\\\mpar{0.018}{0.003}} &
\makecell[r]{27.36\\18.73\\15.15\\11.95\\9.934}&
\makecell[r]{\mpar{1.347}{0.070}\\\mpar{1.484}{0.079}\\\mpar{1.540}{0.060}\\\mpar{1.678}{0.081}\\\mpar{1.713}{0.085}}\\
\midrule
  \makecell[c]{{\footnotesize\emph{random-32-32-20}}\\\adjustbox{raise=-5mm}{\includegraphics[width = 0.08\linewidth]{figs/random-32-32-20.png}}} &
\makecell[r]{10\\20\\30\\40\\50}&\makecell[r]{82\\48\\34\\24\\4}  &
\makecell[r]{\mpar{35.84}{17.27}\\\mpar{74.84}{38.93}\\\mpar{56.72}{37.89}\\\mpar{77.89}{48.05}\\93.66} &\makecell[r]{\mpar{0.018}{0.003}\\\mpar{0.041}{0.008}\\\mpar{0.061}{0.012}\\\mpar{0.092}{0.009}\\ \mpar{0.231}{0.022}} &
\makecell[r]{1660\\1499\\862.1\\804.1\\464.0}&
\makecell[r]{\mpar{1.131}{0.045}\\\mpar{1.460}{0.097}\\\mpar{1.650}{0.170}\\\mpar{1.676}{0.132}\\3.083}\\
\bottomrule
\end{tabular}
\end{adjustbox}
\end{table*}

\subsection{Table of results}\label{app:resulttable}
\Cref{tab:64-64,tab:48-48,tab:32-32,tab:warehouse,tab:mapsize} present the detailed average values and interquartile ranges for each experiment. The interquartile range is presented in parentheses.

\subsection{\lacam for adversarial instances}\label{sec:explacam}

Finally, we validate our eventually optimal solver on three adversarial instances, with their results shown in \Cref{fig:lacam-small}.
Furthermore,  \Cref{fig:lacam-iteration} illustrates the trajectory of makespan for small instances, with the number of iterations on the x-axis.
We can observe that \lacam obtains an optimal solution for sufficiently small cases. 
However, instances such as \emph{10-3-0} reveal \lacam's struggle to produce an optimal solution within two hours, indicating a need for more time to explore the entire configuration space.
We also tested larger maps, such as \emph{random-32-32-20} with randomly-generated instance ($|A|=100$) with $\text{LB}=22$. 
\Cref{fig:lacam-32-32-20} shows the results of the refinement. We can observe that there is a slight change in the solution quality. 
We tested other instances on \emph{random-32-32-20}, but observed the same trend without notable improvements in makespan.
This indicates room for further improvement.
% We regard this as an initial reference point for future research. Indeed, experimental results show that it is not efficient, which indicates room for further improvement.

\begin{figure}[t]

    \centering
    \begin{tikzpicture}

        \node[anchor=north west, inner sep=0] (image) at (0.15,5.8) {
        \includegraphics[width=0.9\linewidth]{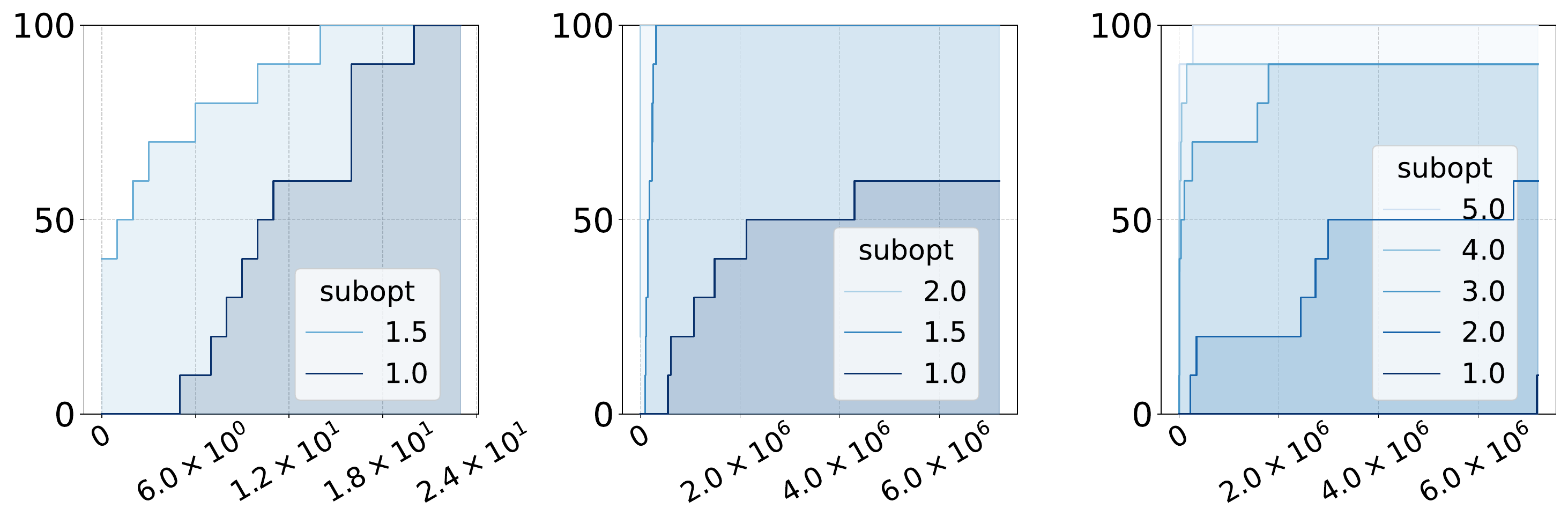}
    };
    \node[anchor = north west, inner sep = 0, rotate=90] at (0,4.1) {trials (\%)};
    \node[fill = white,anchor=north west, inner sep=0,rotate =90] (image) at (1.55,5.7) {
        \includegraphics[width=0.05\linewidth]{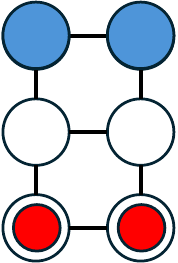}
    };
    \node at (1,5.9) {\emph{{\small2-3-0}}};
    \node[fill = white,anchor=north west, inner sep=0] (image) at (4.1,6.1) {
        \includegraphics[width=0.1\linewidth]{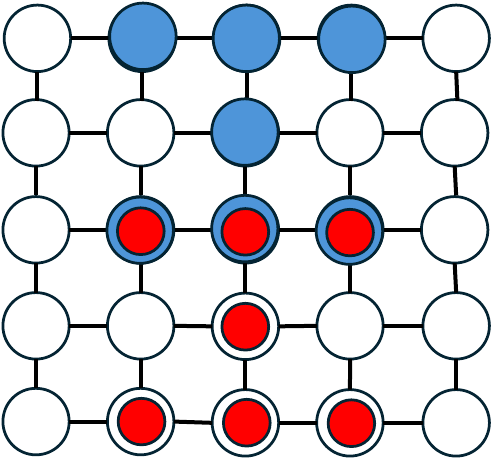}
    };
    \node at (3.5,5.9) {\emph{{\small5-5-0}}};
    \node at (5.9,5.9) {\emph{{\small10-3-0}}};
    \node[fill = white,anchor=north west, inner sep=0] (image) at (6.5,6.1) {
        \includegraphics[width=0.15\linewidth]{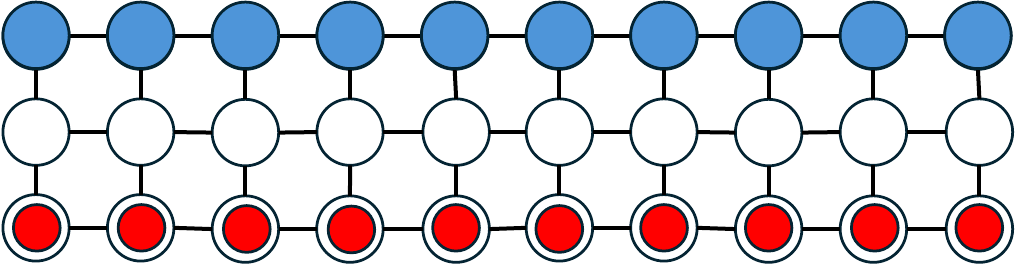}
    };
    \node[anchor = north west, inner sep = 0] at (2.5,3.3) {running time (ms)};
    \end{tikzpicture}
\caption{Experimental results of an eventually optimal algorithm using \lacam. The results are obtained from 10 runs with different random seeds. Each plot represents the percentage of trials that reached a specific makespan/LB (subopt).}
    \label{fig:lacam-small}
\end{figure}

% \Cref{fig:lacam-iteration} illustrates the trajectory of makespan for small instances, with the number of iterations on the x-axis.
% For instance like 10-3-0, we can observe that the number of iterations is small for seeds that converge to the optimal solution.
\begin{figure}[ht]

    \centering
    \begin{tikzpicture}

        \node[anchor=north west, inner sep=0] (image) at (0.15,5.8) {
        \includegraphics[width=0.9\linewidth]{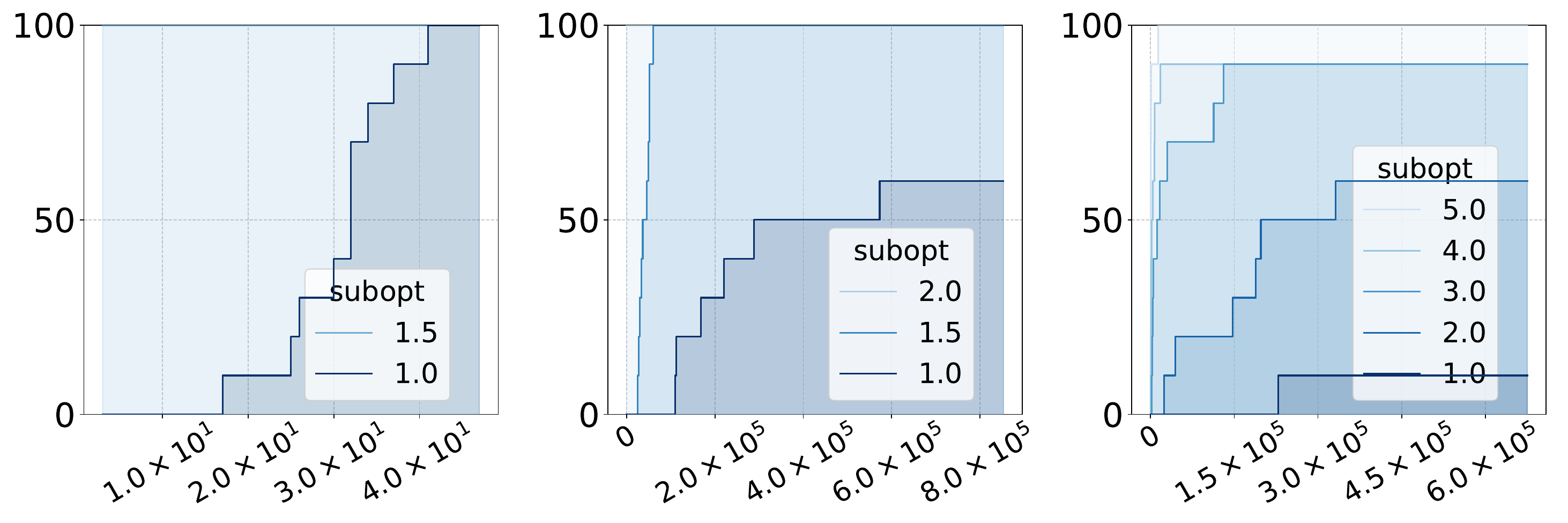}
    };
    \node[anchor = north west, inner sep = 0, rotate=90] at (0,4.1) {rate (\%)};
    \node at (1,5.9) {{\small2-3-0}};
    \node at (3.5,5.9) {{\small5-5-0}};
    \node at (5.9,5.9) {{\small10-3-0}};
    \node[anchor = north west, inner sep = 0] at (2.7,3.3) {number of iteration};
    \end{tikzpicture}

\caption{Experimental results of an eventually optimal algorithm using \lacam. The results are obtained from 10 runs with different random seeds. Each plot represents the percentage of trials that reached a specific makespan/LB.}
    \label{fig:lacam-iteration}
\end{figure}

% Additionally we performed searches with map \emph{random-32-32-20}, with randomly-generated instance ($|A|=100$) with $\text{LB}=22$. 
% \Cref{fig:lacam-32-32-20} shows the results of the refinement. We can observe that there is a slight change in the solution quality. 
% We obtain similar results for other instances on \emph{random-32-32-20}.

\begin{figure}[ht]

    \centering
    \begin{tikzpicture}

        \node[anchor=north west, inner sep=0] (image) at (0.15,5.8) {
        \includegraphics[width=0.9\linewidth]{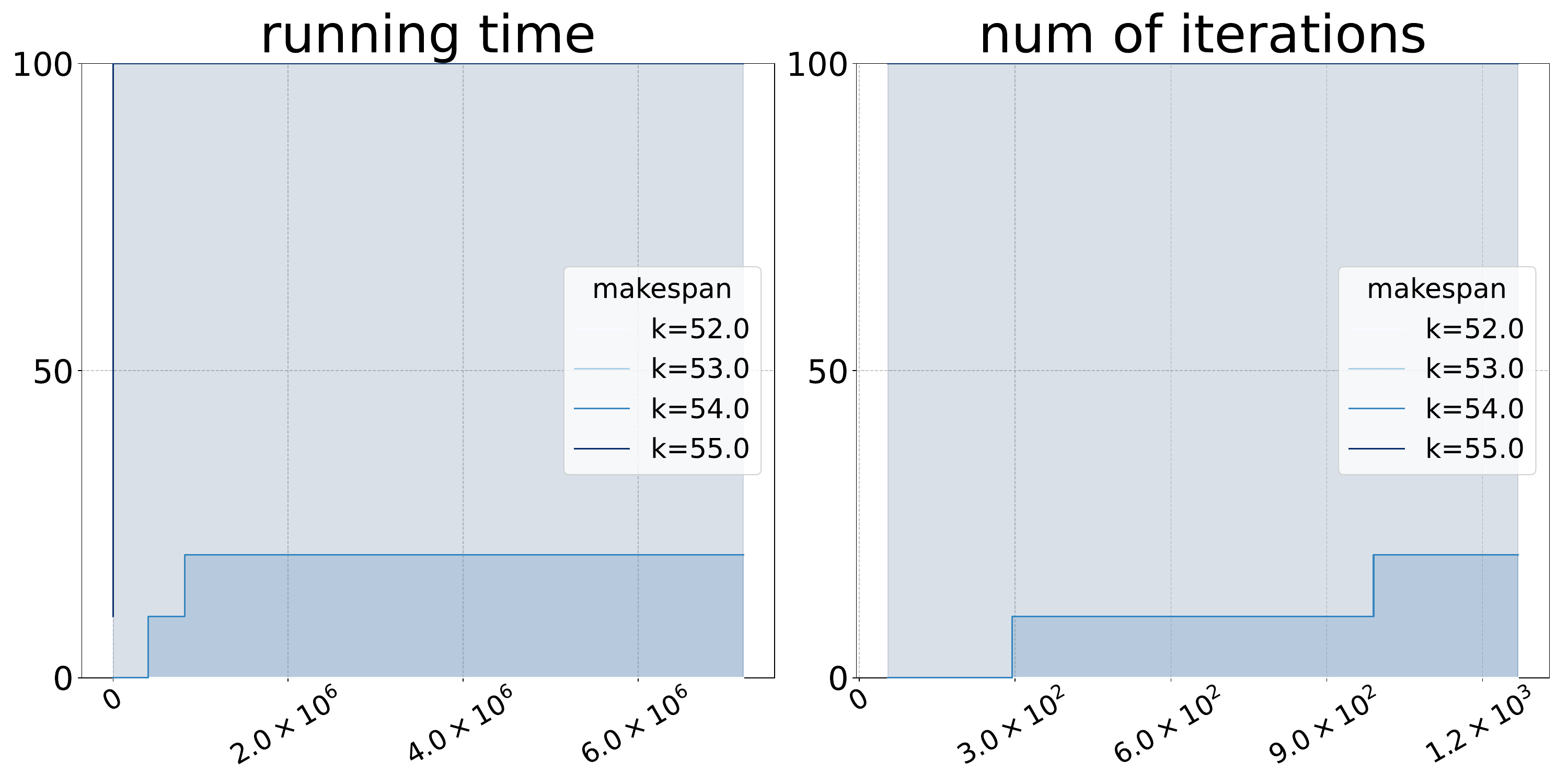}
    };
    \node[anchor = north west, inner sep = 0, rotate=90] at (0,3.4) {rate (\%)};
    \node[fill = white,anchor=north west, inner sep=0,rotate =90] (image) at (1.2,4.2) {
        \includegraphics[width=0.13\linewidth]{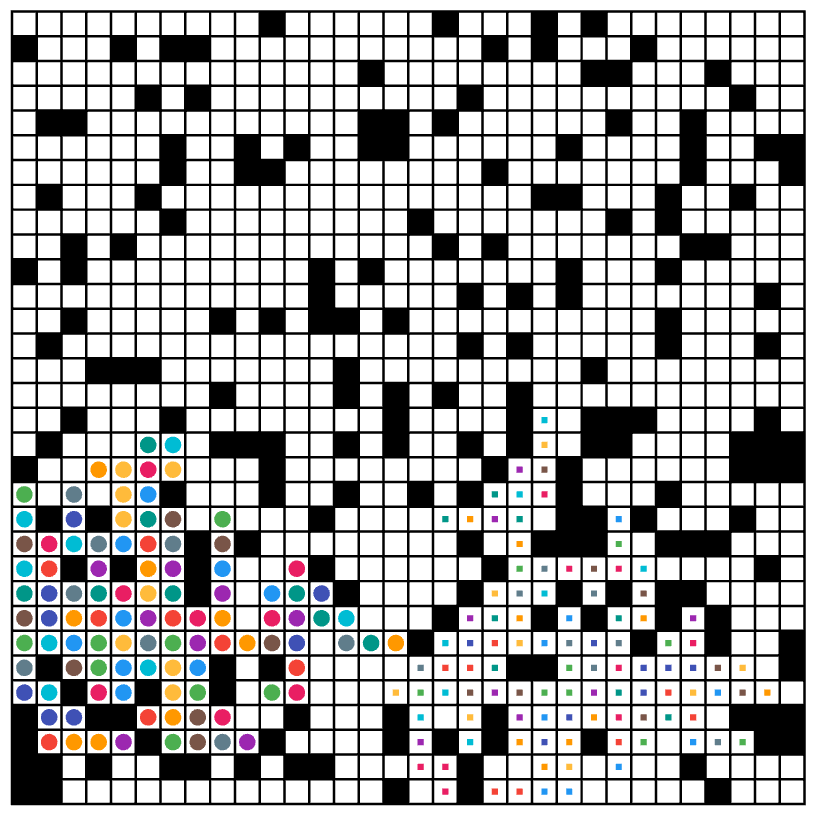}
    };
    \end{tikzpicture}

\caption{Experimental results of an eventualy optimal algorithm using \lacam, on \emph{random-32-32-20}. The results are obtained from 10 runs with different random seeds. }
    \label{fig:lacam-32-32-20}
\end{figure}

\begin{table*}[t]
\centering
\caption{random-64-64-20}\label{tab:64-64}
\setlength{\tabcolsep}{2.5pt}
\renewcommand{\arraystretch}{1.08}
\begin{adjustbox}{max width=\textwidth}
\begin{tabular}{@{}c l *{10}{c}@{}}
\toprule
 & & \multicolumn{10}{c}{$|A|$} \\
\cmidrule(lr){3-12}
\thead{metrics} & \thead{method}
 & 100 & 200 & 300 & 400 & 500 & 600 & 700 & 800 & 900 & 1000 \\
\midrule
time &
\mth{\algname}{single} &
\makecell{\valci{15.7}{13.5}{18.0}\\[-1pt]\valci{1.2}{1.1}{1.2}} &
\makecell{\valci{42.9}{36.3}{49.2}\\[-1pt]\valci{2.1}{2.0}{2.1}} &
\makecell{\valci{77.0}{64.5}{88.7}\\[-1pt]\valci{3.2}{3.1}{3.3}} &
\makecell{\valci{122.3}{101.6}{144.3}\\[-1pt]\valci{4.5}{4.4}{4.6}} &
\makecell{\valci{166.5}{143.6}{189.2}\\[-1pt]\valci{5.9}{5.7}{6.0}} &
\makecell{\valci{227.4}{200.4}{257.1}\\[-1pt]\valci{7.3}{7.2}{7.4}} &
\makecell{\valci{260.7}{226.2}{300.5}\\[-1pt]\valci{9.0}{8.9}{9.2}} &
\makecell{\valci{338.6}{299.4}{385.1}\\[-1pt]\valci{10.7}{10.5}{10.9}} &
\makecell{\valci{363.3}{317.9}{420.5}\\[-1pt]\valci{12.6}{12.4}{12.8}} &
\makecell{\valci{449.2}{397.3}{509.3}\\[-1pt]\valci{14.5}{14.3}{14.7}} \\
\midrule
\textsf{makespan}/LB &
\mth{\algname}{single} &
\makecell{\valci{2.102}{1.844}{2.261}\\[-1pt]\valci{3.100}{2.317}{3.681}} &
\makecell{\valci{2.445}{2.155}{2.684}\\[-1pt]\valci{5.439}{3.992}{7.057}} &
\makecell{\valci{2.629}{2.340}{2.908}\\[-1pt]\valci{7.238}{5.574}{8.723}} &
\makecell{\valci{2.774}{2.398}{3.082}\\[-1pt]\valci{8.615}{6.979}{10.453}} &
\makecell{\valci{3.056}{2.617}{3.362}\\[-1pt]\valci{11.013}{9.369}{12.832}} &
\makecell{\valci{2.997}{2.626}{3.370}\\[-1pt]\valci{12.022}{10.478}{14.199}} &
\makecell{\valci{3.208}{2.809}{3.542}\\[-1pt]\valci{13.598}{12.104}{15.079}} &
\makecell{\valci{3.202}{2.919}{3.433}\\[-1pt]\valci{15.028}{13.617}{16.496}} &
\makecell{\valci{3.500}{3.121}{3.773}\\[-1pt]\valci{16.380}{14.956}{18.277}} &
\makecell{\valci{3.414}{3.084}{3.639}\\[-1pt]\valci{17.537}{15.812}{19.305}} \\
\bottomrule
\end{tabular}
\end{adjustbox}
\end{table*}

\begin{table*}[t]
\centering
\caption{random-48-48-20}\label{tab:48-48}
\setlength{\tabcolsep}{2.5pt}
\renewcommand{\arraystretch}{1.08}
\begin{adjustbox}{max width=\textwidth}
\begin{tabular}{@{}c l *{10}{c}@{}}
\toprule
 & & \multicolumn{10}{c}{$|A|$} \\
\cmidrule(lr){3-12}
\thead{metrics} & \thead{method}
 & 100 & 200 & 300 & 400 & 500 & 600 & 700 & 800 & 900 & 1000 \\
\midrule
time &
\mth{\algname}{single} &
\makecell{\valci{15.2}{13.0}{17.2}\\[-1pt]\valci{1.1}{1.0}{1.1}} &
\makecell{\valci{38.3}{34.5}{43.3}\\[-1pt]\valci{2.1}{2.0}{2.1}} &
\makecell{\valci{72.8}{65.3}{81.7}\\[-1pt]\valci{3.2}{3.1}{3.2}} &
\makecell{\valci{110.1}{99.2}{123.0}\\[-1pt]\valci{4.5}{4.3}{4.5}} &
\makecell{\valci{149.6}{138.2}{172.5}\\[-1pt]\valci{5.9}{5.8}{6.1}} &
\makecell{\valci{182.2}{160.5}{204.6}\\[-1pt]\valci{7.5}{7.3}{7.6}} &
\makecell{\valci{226.1}{198.6}{260.5}\\[-1pt]\valci{9.2}{9.0}{9.3}} &
\makecell{\valci{261.4}{226.8}{308.5}\\[-1pt]\valci{11.0}{10.8}{11.1}} &
\makecell{\valci{320.8}{273.1}{372.0}\\[-1pt]\valci{12.7}{12.6}{12.9}} &
\makecell{\valci{361.6}{302.5}{415.2}\\[-1pt]\valci{14.6}{14.4}{14.8}} \\
\midrule
\textsf{makespan}/LB &
\mth{\algname}{single} &
\makecell{\valci{2.159}{1.907}{2.333}\\[-1pt]\valci{3.511}{2.595}{4.271}} &
\makecell{\valci{2.530}{2.242}{2.745}\\[-1pt]\valci{6.243}{5.233}{7.241}} &
\makecell{\valci{2.826}{2.409}{3.056}\\[-1pt]\valci{8.268}{7.102}{9.505}} &
\makecell{\valci{3.148}{2.789}{3.431}\\[-1pt]\valci{10.478}{9.621}{11.403}} &
\makecell{\valci{3.222}{2.794}{3.587}\\[-1pt]\valci{12.209}{10.985}{13.510}} &
\makecell{\valci{3.203}{2.795}{3.480}\\[-1pt]\valci{13.203}{11.884}{14.840}} &
\makecell{\valci{3.370}{3.069}{3.648}\\[-1pt]\valci{14.505}{13.330}{16.336}} &
\makecell{\valci{3.382}{3.111}{3.654}\\[-1pt]\valci{14.410}{12.360}{16.683}} &
\makecell{\valci{3.464}{3.151}{3.730}\\[-1pt]\valci{15.625}{13.000}{17.916}} &
\makecell{\valci{3.547}{3.180}{3.896}\\[-1pt]\valci{16.355}{14.414}{19.040}} \\
\bottomrule
\end{tabular}
\end{adjustbox}
\end{table*}

\begin{table*}[t]
\centering
\caption{random-32-32-20}\label{tab:32-32}
\setlength{\tabcolsep}{3pt}
\renewcommand{\arraystretch}{1.08}
\begin{adjustbox}{max width=\textwidth}
\begin{tabular}{@{}c l *{5}{c}@{}}
\toprule
 & & \multicolumn{5}{c}{$|A|$} \\
\cmidrule(lr){3-7}
\thead{metrics} & \thead{method}
 & 100 & 200 & 300 & 400 & 500 \\
\midrule
time &
\mth{\algname}{single} &
\makecell{\valci{12.2}{10.9}{13.6}\\[-1pt]\valci{1.1}{1.0}{1.1}} &
\makecell{\valci{31.9}{27.9}{35.4}\\[-1pt]\valci{2.1}{2.0}{2.1}} &
\makecell{\valci{55.1}{48.3}{62.5}\\[-1pt]\valci{3.2}{3.1}{3.3}} &
\makecell{\valci{76.6}{65.0}{87.7}\\[-1pt]\valci{4.6}{4.4}{4.7}} &
\makecell{\valci{94.3}{77.1}{108.6}\\[-1pt]\valci{6.0}{5.8}{6.1}} \\
\midrule
\textsf{makespan}/LB &
\mth{\algname}{single} &
\makecell{\valci{2.284}{1.892}{2.500}\\[-1pt]\valci{4.371}{3.805}{5.000}} &
\makecell{\valci{2.592}{2.282}{2.775}\\[-1pt]\valci{7.384}{6.709}{8.195}} &
\makecell{\valci{2.832}{2.574}{3.108}\\[-1pt]\valci{9.400}{8.424}{10.535}} &
\makecell{\valci{2.856}{2.579}{3.167}\\[-1pt]\valci{10.140}{8.475}{11.601}} &
\makecell{\valci{2.862}{2.563}{3.143}\\[-1pt]\valci{10.059}{8.122}{12.000}} \\
\bottomrule
\end{tabular}
\end{adjustbox}
\end{table*}

\begin{table*}[t]
\centering
\caption{warehouse-10-20-10-2-2}\label{tab:warehouse}
\setlength{\tabcolsep}{2.5pt}
\renewcommand{\arraystretch}{1.08}
\begin{adjustbox}{max width=\textwidth}
\begin{tabular}{@{}c l *{10}{c}@{}}
\toprule
 & & \multicolumn{10}{c}{$|A|$} \\
\cmidrule(lr){3-12}
\thead{metrics} & \thead{method}
 & 100 & 200 & 300 & 400 & 500 & 600 & 700 & 800 & 900 & 1000 \\
\midrule
time &
\mth{\algname}{single} &
\makecell{\valci{19.0}{15.3}{23.3}\\[-1pt]\valci{1.2}{1.1}{1.2}} &
\makecell{\valci{48.0}{39.1}{54.4}\\[-1pt]\valci{2.2}{2.1}{2.2}} &
\makecell{\valci{90.9}{69.8}{110.1}\\[-1pt]\valci{3.3}{3.2}{3.4}} &
\makecell{\valci{131.9}{108.8}{149.2}\\[-1pt]\valci{4.6}{4.4}{4.7}} &
\makecell{\valci{182.1}{144.0}{212.0}\\[-1pt]\valci{6.1}{5.9}{6.2}} &
\makecell{\valci{250.9}{202.0}{288.2}\\[-1pt]\valci{7.5}{7.3}{7.6}} &
\makecell{\valci{308.7}{250.4}{365.8}\\[-1pt]\valci{9.3}{9.0}{9.5}} &
\makecell{\valci{385.6}{321.8}{444.8}\\[-1pt]\valci{11.1}{10.7}{11.3}} &
\makecell{\valci{435.1}{353.3}{506.2}\\[-1pt]\valci{13.0}{12.6}{13.4}} &
\makecell{\valci{496.9}{432.8}{547.7}\\[-1pt]\valci{15.1}{14.5}{15.5}} \\
\midrule
\textsf{makespan}/LB &
\mth{\algname}{single} &
\makecell{\valci{1.739}{1.488}{1.953}\\[-1pt]\valci{2.660}{1.911}{2.967}} &
\makecell{\valci{2.102}{1.671}{2.388}\\[-1pt]\valci{4.478}{3.130}{5.520}} &
\makecell{\valci{2.145}{1.769}{2.365}\\[-1pt]\valci{6.378}{4.251}{8.100}} &
\makecell{\valci{2.404}{1.921}{2.692}\\[-1pt]\valci{8.263}{5.809}{10.265}} &
\makecell{\valci{2.441}{1.989}{2.595}\\[-1pt]\valci{9.750}{7.083}{12.395}} &
\makecell{\valci{2.394}{1.967}{2.566}\\[-1pt]\valci{10.310}{7.527}{12.352}} &
\makecell{\valci{2.436}{2.013}{2.735}\\[-1pt]\valci{11.616}{9.031}{13.637}} &
\makecell{\valci{2.635}{2.103}{2.985}\\[-1pt]\valci{13.244}{9.439}{15.504}} &
\makecell{\valci{2.718}{2.098}{3.252}\\[-1pt]\valci{14.091}{11.802}{16.729}} &
\makecell{\valci{2.748}{2.230}{3.175}\\[-1pt]\valci{15.227}{11.837}{18.048}} \\
\bottomrule
\end{tabular}
\end{adjustbox}
\end{table*}

\begin{table*}[t]
\centering
\caption{$|A|=100$, 20\% obstacles}\label{tab:mapsize}
\setlength{\tabcolsep}{3pt}
\renewcommand{\arraystretch}{1.08}
\begin{adjustbox}{max width=\textwidth}
\begin{tabular}{@{}c l *{7}{c}@{}}
\toprule
 & & \multicolumn{7}{c}{map size} \\
\cmidrule(lr){3-9}
\thead{metrics} & \thead{method}
 & 16$\times$16 & 24$\times$24 & 32$\times$32 & 40$\times$40 & 48$\times$48 & 56$\times$56 & 64$\times$64 \\
\midrule
time &
\mth{\algname}{single} &
\makecell{\valci{9.1}{7.7}{9.9}\\[-1pt]\valci{1.3}{1.1}{1.4}} &
\makecell{\valci{10.9}{9.6}{12.5}\\[-1pt]\valci{1.3}{1.1}{1.4}} &
\makecell{\valci{12.2}{10.9}{13.6}\\[-1pt]\valci{1.1}{1.0}{1.1}} &
\makecell{\valci{13.6}{11.5}{15.4}\\[-1pt]\valci{1.1}{1.0}{1.2}} &
\makecell{\valci{15.2}{13.0}{17.2}\\[-1pt]\valci{1.1}{1.0}{1.1}} &
\makecell{\valci{16.8}{14.7}{18.5}\\[-1pt]\valci{1.1}{1.0}{1.1}} &
\makecell{\valci{15.7}{13.5}{18.0}\\[-1pt]\valci{1.2}{1.1}{1.2}} \\
\midrule
\textsf{makespan}/LB &
\mth{\algname}{single} &
\makecell{\valci{2.19}{1.83}{2.37}\\[-1pt]\valci{5.15}{4.44}{5.95}} &
\makecell{\valci{2.44}{2.00}{2.69}\\[-1pt]\valci{4.81}{4.30}{5.32}} &
\makecell{\valci{2.28}{1.89}{2.50}\\[-1pt]\valci{4.37}{3.81}{5.00}} &
\makecell{\valci{2.26}{1.99}{2.45}\\[-1pt]\valci{4.04}{3.14}{4.71}} &
\makecell{\valci{2.16}{1.91}{2.33}\\[-1pt]\valci{3.51}{2.60}{4.27}} &
\makecell{\valci{2.05}{1.79}{2.24}\\[-1pt]\valci{3.17}{2.46}{3.79}} &
\makecell{\valci{2.10}{1.84}{2.26}\\[-1pt]\valci{3.10}{2.32}{3.68}} \\
\bottomrule
\end{tabular}
\end{adjustbox}
\end{table*}

\end{document}